\newtheorem{theorem}{Theorem}[section]
\newtheorem{proposition}[theorem]{Proposition}
\newtheorem{corollary}[theorem]{Corollary}
\newtheorem{definition}[theorem]{Definition}
\newtheorem{remark}[theorem]{Remark}
\newcommand{\BH}{\mbox{$\mathcal{B}(\mathcal{H})$}}
\newcommand{\BL}{\mbox{$\mathcal{B}(\mathcal{L})$}}
\newcommand{\BP}{\mbox{$\mathcal{B}(\mathcal{P})$}}
\newcommand{\OH}{\mbox{$\Omega^{\mathcal{H}}$}}
\newcommand{\OK}{\mbox{$\Omega^{\mathcal{K}}$}}
\newcommand{\OP}{\mbox{$\Omega^{\mathcal{P}}$}}
\newcommand{\OKi}{\mbox{$\Omega_\infty^{\mathcal{K}}$}}
\newcommand{\OPi}{\mbox{$\Omega_\infty^{\mathcal{P}}$}}
\newcommand{\Cset}{\mathbb{C}}
\newcommand{\Nset}{\mathbb{N}}
\newcommand{\Zset}{\mathbb{Z}} 
\newcommand{\CA}{\ensuremath{{\mathcal A}}\xspace}         
\newcommand{\CC}{\ensuremath{{\mathcal C}}\xspace}         
\newcommand{\CE}{\ensuremath{{\mathcal E}}\xspace}         
\newcommand{\CG}{\ensuremath{{\mathcal G}}\xspace}         
\newcommand{\CH}{\ensuremath{{\mathcal H}}\xspace}         
\newcommand{\CK}{\ensuremath{{\mathcal K}}\xspace}         
\newcommand{\CL}{\ensuremath{{\mathcal L}}\xspace}         
\newcommand{\CP}{\ensuremath{{\mathcal P}}\xspace}         
\newcommand{\CU}{\ensuremath{{\mathcal U}}\xspace}         
\newcommand{\CY}{\ensuremath{{\mathcal Y}}\xspace}
\newcommand{\eins}{\ensuremath{{\rm 1\kern-.25em l}}\xspace}  
\begin{document}

\title[Non-commutative Markov Chains and Multi-analytic Operators]
{Non-commutative Markov Chains and Multi-analytic Operators}
\author{Rolf Gohm}
\address{Dr.Rolf Gohm \\
Institute of Mathematics and Physics \\ 
Aberystwyth University \\
Aberystwyth, 
Ceredigion SY23 3BZ, UK}
\email{rog@aber.ac.uk}
\subjclass[2000]{47A13, 46L53, 47A48, 47A40, 81R15}
\keywords{repeated interaction, open quantum system, non-commutative Markov chain, multivariate operator theory, row contraction, outgoing Cuntz scattering system, transfer function, multi-analytic operator, input-output formalism, linear system, quantum control, observability, scattering theory, characteristic function}
\date{February 19, 2009}
\begin{abstract}
We study a model of repeated interaction between quantum systems which can be thought of as a non-commutative Markov chain. It is shown that there exists an outgoing Cuntz scattering system associated to this model which induces an input-output formalism with a transfer function corresponding to a multi-analytic operator, in the sense of multivariate operator theory. Finally we show that observability for this system is closely related to the scattering theory of non-commutative Markov chains. 
\end{abstract}
\maketitle

\section{Introduction}
\label{section:intro}

The purpose of this paper is to point to an interesting connection between non-commutative Markov chains, which are a well known mathematical model for open quantum systems embedded into an environment (see for example \cite{Ku03} for a recent survey), 
and multi-analytic operators, which are a central idea in the development of multivariate operator theory (as described for example in \cite{Po95}). Though some hints about such connections between non-commutative probability theory and multivariate operator theory can be found in the literature, see for example \cite{Bh96,Bh01,Go04,DG07},
this is not widely appreciated and a more systematic investigation is missing. We want to present an argument that both fields can benefit from each other and make a start by studying a rather elementary mathematical toy model from this point of view. Let us describe some more background in order to clarify the developments and applications we have in mind. 

The theory of open quantum systems embedded into an environment has been driven more and more to the language of linear systems theory, input-output formalisms and control. We only mention
the theoretical physics survey in \cite{GZ00} and the recent investigations on quantum networks \cite{GJ}. The use of Laplace transforms and analytic functions is routine in classical linear systems and control theory but it is not at all obvious how to generalize this part of the theory to the quantum world. Though a notion of transfer function for bosonic fields has been recently developed \cite{YK03a,YK03b}, see also \cite{GJ,GGY08}, we do not follow this approach here but instead we suggest to introduce analyticity into the theory of quantum systems by the use of multi-analytic operators.

These operators, also known as analytic intertwining operators in \cite{BV05}, have been introduced and studied in \cite{Po89a,Po89b,Po95} in the development of a multi-variable version for row contractions of the Sz.-Nagy/Foias-theory \cite{SF70} and it has been shown that many aspects of the classical theory of analytic functions generalize to this setting, see for example \cite{Po06}. Among them are techniques which are relevant to linear systems and control.

It is therefore clear that the study of quantum systems would benefit if it is possible to construct such operators as a generalized type of transfer function for these systems. We show in this paper that if we can model the quantum system and the interaction in a certain way by a non-commutative Markov chain then this is indeed possible.

As this seems to be a new idea and as our aims in this paper are partly expository, to convince theoretical physicists working on interactions of quantum systems and pure mathematicians working on multivariate operator theory that there exists a promising intersection of their interests, we have made no attempt to study the most general model but instead we investigate a very elementary discrete time toy model of repeated interactions between an open system and its environment. Though stripped to its bare basics it nevertheless already exhibits some interesting features which we expect to be typical. The model is defined in a purely mathematical way and then interpreted physically in Section \ref{section:interaction}. In Section
\ref{section:cuntz} we shift our attention to multivariate operator theory and prove that we can find an outgoing Cuntz scattering system in the sense of \cite{BV05} inside our model of repeated interaction.
It is well known that this leads to a transfer function which in fact corresponds to a multi-analytic operator. We work out in Section
\ref{section:transfer} how this transfer function and related concepts can be developed in a convenient way for our setting and we give a physical interpretation in terms of certain experimental records. Another concept motivated from a control point of view is observability. We show in Section \ref{section:observ} that in our setting observability is closely related to a scattering theory for non-commutative Markov chains introduced in \cite{KM00} whose further development \cite{Go04,GKL06} in fact led the author into the direction followed in this paper. We refer to the end of Section \ref{section:observ} for a more detailed discussion of this scattering interpretation which gives further evidence that our transfer function via multivariate operator theory is also a natural tool for the study of the physical system. We further comment there about some closely related investigations about characteristic functions of ergodic tuples and of liftings in \cite{DG07,DG}.

\section{A model for repeated interactions} \label{section:interaction}

We want to study a very elementary mathematical model for an interaction between quantum systems. The model is specified by a few operator theoretic data.

\begin{definition} \label{def:interaction}
Given \\
\begin{itemize}
\item[]
three Hilbert spaces $\CH,\, \CK,\, \CP$ \\
\item[]
a unitary operator $U: \CH \otimes \CK \rightarrow \CH \otimes \CP \;$
(i.e. $U^* U = U U^* = \eins$) \\
\item[]
unit vectors $\OH \in \CH,\, \OK \in \CK,\, \OP \in \CP$ such that
\[
U\,\big(\OH \otimes \OK\big) = \OH \otimes \OP
\]
\end{itemize}
we say that $U$ is an {\it interaction with vacuum vectors} $\OH, \OK, \OP$.
\end{definition}

We can form the infinite tensor products $\CK_\infty := \bigotimes^\infty_1 \CK$ and $\CP_\infty := \bigotimes^\infty_1 \CP$ with distinguished unit vectors
$\OKi = \bigotimes^\infty_1 \OK$ and $\OPi = \bigotimes^\infty_1 \OP$,
see \cite{KR83}, 11.5.29.
We denote the $\ell$-th copies by $\CK_\ell$ and $\Omega_\ell^\CK$ and use a natural notation built upon it.
For example, for $\Omega_1^\CK \otimes \Omega_2^\CK$ we write $\Omega_{[1,2]}^\CK$,
for $\CK_3 \otimes \CK_4 \otimes \CK_5$ we write 
$\CK_{[3,5]}$ and identify it with
$\Omega_{[1,2]}^\CK \otimes \CK_{[3,5]} \otimes 
\Omega_{[6,\infty)}^\CK \subset \CK_\infty$.
Similarly for $\CP$ where we have spaces $\CP_{[m,n]}$
and vectors $\Omega_{[m,n]}^\CP$.
Embeddings without further explanations are always understood in such a way, using the vacuum vectors, for example
\[
\CH \simeq \CH \otimes \OKi \subset \CH \otimes \CK_\infty \supset
\OH \otimes \CK_\infty \simeq \CK_\infty .
\]

We can now define repeated interactions. For $\ell \in \Nset$ let
\[
U_\ell: \CH \otimes \CK_\infty \rightarrow \CH \otimes \CK_{[1,\ell-1]} \otimes \CP_\ell \otimes \CK_{[\ell+1,\infty)}
\]
be the unitary operator which is equal to $U$ on $\CH \otimes \CK_\ell$ and which acts identically on the other factors of the tensor product. Then the repeated interaction up to time $n \in \Nset$ is defined by
\[
U(n):= U_n \ldots U_1: \CH \otimes \CK_\infty \rightarrow \CH \otimes \CP_{[1,n]} \otimes \CK_{[n+1,\infty)}
\]

A physical interpretation of a very similar scheme is developed in detail in \cite{BHJ}. Let us quickly sketch the basic ideas. To do that assume for the moment that $\CK = \CP$ and $\OK = \OP$ and that $U$ is a unitary operator on $\CH \otimes \CK$.
We may think of $\CH$ as the (quantum-mechanical) Hilbert space of an atom,
$\CK$ as the Hilbert space of a (portion of a) light beam and $U$ as a
(toy model of their) quantum mechanical interaction. After the interaction the (pure) state $\eta \in \CH \otimes \CK$ of the coupled system is changed into $U \eta \in \CH \otimes \CK$ (Schr\"{o}dinger picture). The only thing we require from our vacuum vectors $\OH$ and $\OK$ is that their tensor product represents a state which is not affected by the interaction $U$.

If we further assume that at time $1$ an interaction happens for $\CH \otimes \CK_1$, after the interaction this portion of the light beam goes away from the atom forever but another portion of the light beam described by $\CK_2$
reaches the atom and interacts at time $2$, etc., then we obtain a repeated interaction and $U(n)= U_n \ldots U_1$ is the Schr\"{o}dinger dynamics up to time $n$. In fact, in \cite{BHJ} there is a detailed derivation of the Heisenberg dynamics $X \mapsto U(n)^*\, X\, U(n)$ for observables $X$ which confirms our sketchy interpretation above. It is shown in \cite{BHJ} that this toy model of a quantum mechanical atom-field interaction allows a discussion of many of the basic equations and constructions of quantum filtering and quantum control which is parallel to the usual continuous time treatment. Moreover the continuous time results can be obtained from that as a scaling limit. Repeated interactions of a similar type are also investigated in \cite{BG08}.

\setlength{\unitlength}{1cm}
\begin{picture}(15,3.5)
\put(0.5,2){\circle{1}} 
\put(2.0,2){\vector(-1,0){1}}
\put(2.0,1){\line(0,1){2}}
\put(3.0,1.5){\line(0,1){1.5}}
\put(4.0,1.5){\line(0,1){1.5}}
\put(5.0,1){\line(0,1){2}}
\put(5.1,1){\ldots}
\put(2.4,1.9){$1$}
\put(3.4,1.9){$2$}
\put(4.4,1.9){$3$}
\put(0,1){atom}
\put(3.0,1){beam}
\put(6.8,1){$\CH \quad \otimes \quad \CK_1 \quad \otimes \quad \CK_2 \quad \otimes \quad \CK_3 \; \ldots$}
\qbezier(7,1.5)(7.9,2)(8.5,1.5)
\qbezier(7,1.5)(8.8,3)(10,1.5)
\qbezier(7,1.5)(9.7,4)(11.5,1.5)
\put(8.4,1.7){$U_1$}
\put(9.6,2){$U_2$}
\put(10.8,2.3){$U_3$}
\end{picture}

\begin{remark} \normalfont
The reader who wants to see the propagator (unitary time evolution operator) for this repeated interaction in a more explicit way should consult the literature about non-commutative Markov chains and their coupling structure, see for example \cite{Ku03} for a survey. For our interactions this gives a description which is essentially equivalent to the model presented here but it includes additionally the dynamics of the beam outside the range of the interaction as a tensor shift. More precisely, consider the enlarged space $\CH \otimes \bigotimes^\infty_{-\infty} \CK$ and let $S$ be the tensor shift to the left (on the copies of $\CK$ while acting identically on $\CH$).
Then $\big( U(n) \big)_{n\in\Nset_0}$ 
(with $U(0) := \eins$) 
can be thought of as a right unitary cocycle for $S$ , i.e. $U(n+m) = S^{-n} U(m) S^n U(n)$, and we obtain an evolution by a unitary group $\big( \tilde{U}^n \big)_{n\in\Zset}$ such that $\tilde{U}^n = S^n U(n)$
for $n\in\Nset_0$. The unitary operator $\tilde{U}$ is the propagator of the coupled atom-field system which includes both the interactions and the movement of the beam to the left before and after the interactions. 

Hence from this enlarged point of view our model of repeated interaction is actually what physicists would call an interaction picture of the dynamics (but we shall nevertheless, with a slight abuse of language, continue to call it Schr\"{o}dinger resp. Heisenberg picture if we change states resp. observables using $U(n)$ in the original model).

While the enlarged model is useful for the study of the complete propagator and for other structural questions we will not use it in this paper because we are more interested in the input-output map of the system and for this purpose the original model of repeated interaction presented before is simpler and more direct. So from now on we work with the model of interaction specified in Definition \ref{def:interaction} and the one-sided model of repeated interaction built from it. 
\end{remark}

We can think of our model as a non-commutative Markov chain or, from a physicist's point of view, as a Markovian approximation of a repeated atom-field interaction.
Let us elaborate a bit on the Markovianity.
In the Heisenberg picture the change of an observable $X \in \BH$ until time $n$ compressed to $\CH$ is given by
\[
Z_n(X) = P_{\CH}\, U(n)^*\; X \otimes 1 \;U(n) |_{\CH},
\]
where $P_{\CH}$ denotes the orthogonal projection from 
$\CH \otimes \CK_\infty$ onto $\CH$. Let $\big( \epsilon_j \big)$
be an orthonormal basis of the Hilbert space $\CP$. For $\xi \in \CH$
we get
\[
U(\xi \otimes \OK) = \sum_j A_j\,\xi \otimes \epsilon_j
\]
with operators $A_j \in \BH$. Then a short computation yields
\[
Z_n(X) = \sum_{j_1,j_2,\ldots,j_n} A^*_{j_1} \ldots A^*_{j_n}
\,X\,A_{j_n} \ldots A_{j_1} = Z^n(X),
\]
where $Z = \sum_j A^*_j \cdot A_j: \BH \rightarrow \BH$ is a unital completely positive map called the transition operator of the non-commutative Markov chain. The semigroup property of the compressed dynamics established above is one of the basic features of Markovianity which for classical Markov chains is expressed by the Chapman-Kolmogorov equations, see for example \cite{Fe68}. 

Our condition about vacuum vectors yields
\[
\langle \OH, X\, \OH \rangle \;=\; \langle \OH, Z(X)\, \OH \rangle\;,
\]
i.e., the vector state induced by $\OH$ is invariant for $Z$.

\section{Outgoing Cuntz scattering systems} \label{section:cuntz}

To develop a different approach to the model in the previous section we review some notions from multivariable operator theory. 
See for example \cite{Po89a,Po89b,BGM06}.

Suppose $T_1,\ldots,T_d \in \BL$ for a Hilbert space $\CL$. We allow $d=\infty$ but simplify our notation by pretending that $d$ is finite and leave in the following the suitable modifications and limits to the reader. Then

$\underline{T} = (T_1,\ldots,T_d)$ is called a {\it row contraction} if it is contractive as an operator from $\bigoplus^d_1 \CL$ to $\CL$ or,
equivalently, if $\sum^d_1 T_j T^*_j \le 1$.

$\underline{T} = (T_1,\ldots,T_d)$ is called a {\it row isometry} if
it is isometric as an operator from $\bigoplus^d_1 \CL$ to $\CL$ or,
equivalently, if the $T_j$ are isometries with orthogonal ranges.

$\underline{T} = (T_1,\ldots,T_d)$ is called a {\it row unitary} if
it is unitary (isometric and surjective) as an operator from $\bigoplus^d_1 \CL$ to $\CL$ or,
equivalently, if the orthogonal ranges of the isometries $T_j$ 
together span $\CL$.

A row isometry $\underline{T} = (T_1,\ldots,T_d)$ is called a {\it row shift} if there exists a subspace $\CE$ of $\CL$ (the wandering subspace) such that $\CL = \bigoplus_{\alpha \in F^+_d} T_\alpha \CE$. Here (as always in this paper) $\bigoplus$ denotes an orthogonal sum,
$F^+_d$ is the free semigroup with generators $1,\ldots,d$.
If $\alpha \in F^+_d$ is the word $\alpha_\ell \ldots \alpha_1$
of length $|\alpha| = \ell$
(where $\alpha_j \in \{1,\ldots,d\}$) then $T_\alpha = T_{\alpha_\ell} \ldots T_{\alpha_1}$. For the empty word $\emptyset$ we define
$|\emptyset|=0$ and $T_\emptyset = \eins$. 

Recall the following definition from \cite{BV05}, Chapter 5. 
An {\it outgoing Cuntz scattering system} is a collection
\[
\big( \CL,\, \underline{V} = (V_1,\ldots,V_d),\, \CG^+_*,\, \CG \big)
\]
where $\underline{V}$ is a row isometry on the Hilbert space
$\CL$ and $\CG^+_*$ and $\CG$ are subspaces of $\CL$ such that
\begin{enumerate}
\item
$\CG^+_*$ is the smallest $\underline{V}$-invariant subspace containing
\[
\CE_* := \CL \ominus span_{j=1,\ldots,d} \;V_j \CL\;,
\]
thus $\underline{V} |_{\CG^+_*}$ is a row shift and 
$\CG^+_* = \bigoplus_{\alpha \in F^+_d} V_\alpha \CE_*$
\item
$\underline{V} |_{\CG}$ is a row shift, thus
$\CG = \bigoplus_{\alpha \in F^+_d} V_\alpha \CE$ with
\[
\CE := \CG \ominus span_{j=1,\ldots,d} \;V_j \CG.
\]
\end{enumerate}

\begin{remark} \normalfont
Note that $\CG^+_*$ and $\CE_*$ are determined by $\CL$ and $\underline{V}$ and hence in principle could be omitted from the defining data. But the idea behind this concept (the `scattering') is to study the relative position of the two row shifts. 
The decomposition
$\CL = (\CL \ominus \CG^+_*) \oplus \CG^+_*$ gives the Wold decomposition of $\underline{V}$ as a row unitary plus a row shift, see \cite{Po89a}, Theorem 1.3, which is a multi-variable version of the classical Wold decomposition of an isometry. 
\end{remark}

Now we associate an outgoing Cuntz scattering system to an interaction $U$ with vacuum vectors $\OH,\,\OK,\,\OP$, as specified in Definition \ref{def:interaction}. We define the data of the scattering system in terms of those data as follows:
\[
\CL := (\CH \otimes \CK_\infty)^o :=
(\CH \otimes \CK_\infty) \ominus \Cset (\OH \otimes \OKi),
\]
For $\xi \otimes \eta \in \CH \otimes \CK_\infty$ and an orthonormal basis $\big( \epsilon_j \big)_{j=1,\ldots,d}$ of $\CP$
\[
V_j \big( \xi \otimes \eta \big) := U^*(\xi \otimes \epsilon_j) \otimes \eta
\in (\CH \otimes \CK_1) \otimes \CK_{[2,\infty)}
\]
which by linear extension defines $\underline{V} = (V_1,\ldots,V_d)$
on $(\CH \otimes \CK_\infty)^o$ (details in Theorem \ref{thm:cuntz}).
As discussed above, $\CE_*$
and $\CG^+_* = \bigoplus_{\alpha \in F^+_d} V_\alpha \CE_*$
are implicitly defined as the shift part of $\underline{V}$.
Finally
\[
\CE := \CH \otimes (\Omega_1^\CK)^\perp \otimes \Omega_{[2,\infty)}^\CK,
\quad  
\CG = \bigoplus_{\alpha \in F^+_d} V_\alpha \CE.
\]
Note that $\underline{V} = (V_1,\ldots,V_d)$ depends on the choice of the orthonormal basis
$\big( \epsilon_j \big)_{j=1,\ldots,d}$ of $\CP$ but a change of basis by a unitary $d \times d$-matrix has the same effect on the $V_j$ and leads to a closely related system which for many purposes can be identified with the original one.

For technical reasons we also define $\underline{\widehat{V}}$ as the extension of $\underline{V}$ to $\CH \otimes \CK_\infty$ given by the same formula, in other words we have
\[
\widehat{V}_j \big( \OH \otimes \OKi \big) 
= U^* \,(\OH \otimes \epsilon_j) \otimes \Omega_{[2,\infty)}^\CK
\]

We need some auxiliary operators.
By $Q_{[1,n]}$ we denote the orthogonal projection from $\CP_\infty$
to $\CP_{[1,n]} \subset \CP_\infty$. Further we define
$Q_n: \CH \otimes \CP_{[1,n]} \otimes \CK_{[n+1,\infty)} \rightarrow
\CP_{[1,n]} \subset \CP_\infty$ by
\[
Q_n\,(\xi \otimes \zeta \otimes \rho) \;=\; 
\langle \OH, \xi \rangle \; \zeta \; \langle \Omega_{[n+1,\infty)}^\CK, \rho \rangle
\]
which also maps onto $\CP_{[1,n]}$.

\begin{proposition} \label{prop:W}
There exists a coisometry $\widehat{W}: \CH \otimes \CK_\infty \rightarrow \CP_\infty$ such that
\begin{eqnarray*}
\widehat{W}\, &=& sot-\lim_{n\to\infty} Q_n\,U(n) ,
\\
\,\widehat{W}^* &=& sot-\lim_{n\to\infty} U(n)^* |_{\CP_\infty} ,
\end{eqnarray*}
where $sot$ stands for `strong operator topology'.
\\ 
By restriction we obtain a coisometry 
$W: (\CH \otimes \CK_\infty)^o \rightarrow 
(\CP_\infty)^o := \CP_\infty \ominus \Cset \OPi$.
\end{proposition}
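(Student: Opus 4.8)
The plan is to establish the two strong-operator-topology limits and the coisometry property by understanding how the auxiliary projections $Q_n$ and the unitaries $U(n)$ fit together as $n$ grows. First I would record the basic compatibility: since $U(n+1) = U_{n+1} U(n)$ and $U_{n+1}$ acts as $U$ on $\CH \otimes \CK_{n+1}$ while fixing the already-processed factors $\CP_{[1,n]}$, the operator $U(n)$ maps $\CH \otimes \CK_\infty$ into $\CH \otimes \CP_{[1,n]} \otimes \CK_{[n+1,\infty)}$, and the projection $Q_n$ reads off the $\CP_{[1,n]}$-component against the vacuum vectors $\OH$ and $\Omega_{[n+1,\infty)}^\CK$. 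The key algebraic fact I would verify is that the sequence $Q_n U(n)$ is coherent, i.e. that $Q_{[1,n]} Q_{n+1} U(n+1) = Q_n U(n)$ on $\CH \otimes \CK_\infty$; this uses the vacuum condition $U(\OH \otimes \OK) = \OH \otimes \OP$ applied to the $(n+1)$-st factor, which guarantees that restricting to $\CP_{[1,n]}$ after the extra interaction reproduces the earlier value. This coherence says the images live consistently inside the inductive-limit space $\CP_\infty$.

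Next I would prove convergence. For a fixed $\xi \otimes \eta \in \CH \otimes \CK_\infty$ with $\eta$ supported on finitely many factors (a dense set), the vectors $Q_n U(n)(\xi \otimes \eta)$ are eventually stationary in each fixed coordinate block by the coherence relation, and their norms are non-increasing and bounded by $\|\xi \otimes \eta\|$ since each $U(n)$ is unitary and each $Q_n$ is a contraction. Hence the limit exists on a dense set, and uniform boundedness extends it to all of $\CH \otimes \CK_\infty$, defining $\widehat{W}$ as a contraction. For the adjoint formula, I would compute $\langle Q_n U(n)\, x, p\rangle$ for $p \in \CP_{[1,n]}$ and recognize it as $\langle x, U(n)^* p\rangle$ using that $U(n)^*$ maps $\CP_\infty$ back appropriately and that the vacuum insertions in $Q_n$ are precisely what $U(n)^*|_{\CP_\infty}$ produces; taking strong limits identifies $\widehat{W}^* = \sotlim_{n} U(n)^*|_{\CP_\infty}$. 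Since $U(n)^*$ is isometric and the restriction to $\CP_\infty \hookrightarrow \CH \otimes \CP_{[1,n]} \otimes \CK_{[n+1,\infty)}$ (via the vacuum vectors) stays isometric in the limit, $\widehat{W}^*$ is an isometry, which is exactly the statement that $\widehat{W}$ is a coisometry.

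For the final restriction claim, I would check that $\widehat{W}$ sends the distinguished vacuum $\OH \otimes \OKi$ to $\OPi$: this is immediate from the vacuum condition iterated, since $Q_n U(n)(\OH \otimes \OKi) = \Omega_{[1,n]}^\CP$ for every $n$, whose limit in $\CP_\infty$ is $\OPi$. Consequently $\widehat{W}$ maps the orthogonal complement $(\CH \otimes \CK_\infty)^o = (\CH \otimes \CK_\infty) \ominus \Cset(\OH \otimes \OKi)$ into $(\CP_\infty)^o = \CP_\infty \ominus \Cset\,\OPi$, because $\widehat{W}^*\,\OPi = \OH \otimes \OKi$ (from the adjoint formula) so the one-dimensional vacuum lines correspond, and the restriction $W$ of a coisometry to the orthogonal complement of the preimage of the cokernel vacuum is again a coisometry onto $(\CP_\infty)^o$.

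The main obstacle I anticipate is the interchange of the strong limit with the coisometry computation: strong limits of coisometries need not be coisometries in general, so the argument must exploit the specific coherence and the vacuum-preservation to control the adjoints simultaneously. Concretely, the delicate point is showing that $\|\widehat{W}^* p\| = \|p\|$ survives the limit, for which I would lean on the isometry of each $U(n)^*$ together with the fact that the vacuum-tail insertions $\Omega_{[n+1,\infty)}^\CK$ do not lose norm as $n \to \infty$; once that isometry-in-the-limit is secured, the rest is bookkeeping with the inductive tensor structure.
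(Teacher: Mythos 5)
Your proposal is correct in substance but reverses the order of the paper's construction, and the one step this makes harder is the one where you slip. The paper builds $\widehat{W}^*$ first: for $\zeta$ in the dense union $\bigcup_N \CP_{[1,N]}$ (embedded via the vacuum vectors) the sequence $U_1^*\cdots U_n^*\,\zeta$ is \emph{eventually stationary}, since $U^*(\OH\otimes\OP)=\OH\otimes\OK$ means only finitely many $U_\ell^*$ act nontrivially; each term is an isometry, so the limit extends to an isometry on all of $\CP_\infty$, and $\widehat{W}$ is simply its adjoint. The formula $\widehat{W}=\sotlim_{n}Q_nU(n)$ is then extracted by the duality computation $Q_{[1,n]}\widehat{W}\eta=Q_nU(n)\eta$. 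You instead construct $\widehat{W}$ directly as $\lim_n Q_nU(n)$, which is genuinely more delicate: on $\CH\otimes\CK_\infty$ this sequence is \emph{not} eventually stationary (the atom component is generally not $\OH$, so later interactions keep contributing), and you rightly fall back on the coherence relation $Q_{[1,n]}v_{n+1}=v_n$ for $v_n:=Q_nU(n)\eta$. The remaining steps --- identifying $\widehat{W}^*$ by the inner-product computation, getting the coisometry property from the isometry of $\widehat{W}^*$ on $\CP_\infty$, and restricting along the matched vacuum lines $\widehat{W}(\OH\otimes\OKi)=\OPi$, $\widehat{W}^*\OPi=\OH\otimes\OKi$ --- agree with the paper. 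The paper's order of construction buys the convergence for free; yours yields the $\sotlim Q_nU(n)$ description directly but at the cost of an explicit convergence argument.

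That convergence argument is where the flaw sits. Since $v_n=Q_{[1,n]}v_{n+1}$ is an \emph{orthogonal projection} of $v_{n+1}$, the norms $\|v_n\|$ are non-\emph{decreasing} and bounded above by $\|\eta\|$, not non-increasing as you assert. Moreover, ``stationary in each fixed coordinate block plus bounded norms'' does not by itself give norm convergence; you need the Pythagorean identity $\|v_{n+1}-v_n\|^2=\|v_{n+1}\|^2-\|v_n\|^2$ (valid precisely because the increment is orthogonal to $v_n$) together with the convergence of the monotone bounded sequence $\|v_n\|^2$ to conclude that $(v_n)$ is Cauchy. With that correction the martingale-type argument is sound; as literally written, the monotonicity is backwards and the inference to existence of the limit is not justified.
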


\begin{proof}
We start by constructing the adjoint $\widehat{W}^*$. We have
$U^*_\ell |_{\CP_\ell}: \CP_\ell \simeq \OH \otimes \CP_\ell \rightarrow \CH \otimes \CK_\ell$ and $\lim_{n\to\infty} U_1^* \ldots U_n^* \zeta$
clearly exists for $\zeta \in \bigcup_{N \ge 0} \CP_{[0,N]}$. In fact, because of the invariance property of the vacuum vectors
only finitely many of the $U^*_\ell$ act nontrivially on $\zeta$.
Because the $U^*_\ell$ are isometries we obtain an isometric extension $\widehat{W}^*$ to the closure $\CP_\infty$. Its adjoint is a coisometry
$\widehat{W}: \CH \otimes \CK_\infty \rightarrow \CP_\infty$. More explicitly, for $\eta \in \CH \otimes \CK_{[1,m]}$ and $\zeta \in \CP_{[1,n]}$ and $m \le n$ we obtain
\[
\langle\, \widehat{W} \eta,\, \zeta \,\rangle
= \langle\, \eta,\, \widehat{W}^* \zeta \,\rangle
= \langle\, Q_n\, U(n) \eta,\, \zeta \,\rangle 
\]
We conclude that
\[
Q_{[1,n]} \widehat{W} \eta = Q_n U(n) \eta
\]
For $n\to\infty$ the left hand side converges to $\widehat{W} \eta$
and we have
\[
\widehat{W} \eta = \lim_{n\to\infty} Q_n U(n) \eta
\]
Finally we can extend this formula from $\bigcup_{m \ge 1} \CH \otimes \CK_{[1,m]}$ to the whole of $\CH \otimes \CK_\infty$ by
continuity.
The restriction $W$ acts as shown because 
$\widehat{W} (\OH \otimes \OKi) = \OPi$.
\end{proof}

\begin{theorem} \label{thm:cuntz}
Let $U$ be an interaction with vacuum vectors $\OH,\,\OK,\,\OP$, as specified in Definition \ref{def:interaction}.
Then with the definitions above
\[
\big( (\CH \otimes \CK_\infty)^o,\, \underline{V} = (V_1,\ldots,V_d),\, \CG^+_*,\, \CG \big)
\]
is an outgoing Cuntz scattering system.
We obtain an explicit formula for $\CE_*$ as follows.
\[
\CE_* = W^* \CY \subset \CH \otimes \CK_1
\]
with
\[
\CY := \OH \otimes (\Omega_1^\CP)^\perp \otimes \Omega_{[2,\infty)} \subset \OH \otimes \CP_\infty \simeq \CP_\infty
\]
Finally we have
\[
(\CH \otimes \CK_\infty)^o = \CH^o  \oplus \CG\,,
\]
where $\CH^o := \CH \ominus \Cset \OH$. 
\end{theorem}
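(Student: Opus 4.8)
The plan is to match the two sides of the identity against a common finite-level filtration of $\CH\otimes\CK_\infty$. For $n\ge0$ put
\[
M_n := \CH\otimes\CK_{[1,n]}\otimes\Omega_{[n+1,\infty)}^\CK,
\]
so that $M_0\simeq\CH$, the $M_n$ increase, and $\bigcup_n M_n$ is dense by construction of the infinite tensor product. The $n$-th layer (for $n\ge1$) is
\[
M_n\ominus M_{n-1}=\CH\otimes\CK_{[1,n-1]}\otimes(\Omega_n^\CK)^\perp\otimes\Omega_{[n+1,\infty)}^\CK,
\]
and two identifications orient the whole argument: $\CE=M_1\ominus M_0$, and, under $\CH\simeq M_0$, the space $\CH^o$ becomes $M_0\ominus\Cset(\OH\otimes\OKi)$.

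The single structural fact I would isolate is that, for every $n\ge1$,
\[
\bigoplus_{j=1}^d V_j\big(M_n\ominus M_{n-1}\big)=M_{n+1}\ominus M_n.
\]
The inclusion ``$\subseteq$'' is a direct reading of $V_j(\xi\otimes\eta)=U^*(\xi\otimes\epsilon_j)\otimes\eta$: the tail $\eta$, living in slots $[1,n]$ with top slot in $(\Omega_n^\CK)^\perp$, is shifted into slots $[2,n+1]$, while $U^*(\xi\otimes\epsilon_j)$ fills slot $1$, so the image lands in $M_{n+1}\ominus M_n$. Orthogonality of the summands is precisely that $\underline V$ is a row isometry, which I may take from the earlier part of the theorem (or read off in one line from $U^*U=\eins$).

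The reverse inclusion is the heart of the matter and the step I expect to be the main obstacle, because it is exactly here that the \emph{surjectivity} of the interaction is used. Since $U^*\colon\CH\otimes\CP\to\CH\otimes\CK$ is onto and $\{\xi\otimes\epsilon_j:\xi\in\CH,\ j\}$ spans $\CH\otimes\CP$, the vectors $\{U^*(\xi\otimes\epsilon_j)\}$ span $\CH\otimes\CK_1$; tensoring with the shifted (closed) layer-content in slots $[2,n+1]$ shows that the closed span of $\bigcup_j V_j(M_n\ominus M_{n-1})$ is all of $M_{n+1}\ominus M_n$, and together with orthogonality of the ranges the closed orthogonal sum fills the next layer exactly. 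I would be careful here that the span of simple tensors $\{a\otimes b\}$ factors as $(\operatorname{span}a)\otimes(\operatorname{span}b)$, which is legitimate because the slot-$1$ part and the tail part vary independently.

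With the lemma established, the rest is bookkeeping. Writing a length-$(\ell+1)$ word as $\beta=j\alpha$ with $|\alpha|=\ell$ gives $V_\beta=V_jV_\alpha$, so the lemma upgrades the hypothesis $\bigoplus_{|\alpha|=\ell}V_\alpha\CE=M_{\ell+1}\ominus M_\ell$ (base case $\ell=0$ being $\CE=M_1\ominus M_0$) to $\bigoplus_{|\beta|=\ell+1}V_\beta\CE=M_{\ell+2}\ominus M_{\ell+1}$. Summing over lengths, telescoping, and using density of $\bigcup_n M_n$, one obtains $\CG=\bigoplus_{\alpha\in F^+_d}V_\alpha\CE=(\CH\otimes\CK_\infty)\ominus M_0$. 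Since $\CH^o\subseteq M_0$ while $\CG=(\CH\otimes\CK_\infty)\ominus M_0$ is orthogonal to $M_0$, and since $\OH\otimes\OKi\in M_0$, I conclude
\[
\CH^o\oplus\CG=\big(M_0\ominus\Cset(\OH\otimes\OKi)\big)\oplus\big((\CH\otimes\CK_\infty)\ominus M_0\big)=(\CH\otimes\CK_\infty)^o,
\]
which is the claimed decomposition.
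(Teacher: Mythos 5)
Your filtration argument for the last assertion is correct and is, in substance, the same induction the paper uses: the paper's displayed decomposition of $\CH\otimes\CK_{[1,n]}$ is exactly the telescoping of your layers $M_n\ominus M_{n-1}$, and your explicit treatment of the reverse inclusion (surjectivity of $U^*$ giving density of $\operatorname{span}\{U^*(\xi\otimes\epsilon_j)\}$ in $\CH\otimes\CK_1$) fills in what the paper leaves as ``check by induction.'' Two smaller points you wave at deserve a line each: the orthogonal-ranges property of the $V_j$ is not ``an earlier part of the theorem'' but part of what must be proved (it is indeed immediate from $U^*$ isometric and $\langle\epsilon_i,\epsilon_j\rangle=\delta_{ij}$), and the invariance of $(\CH\otimes\CK_\infty)^o$ under the $V_j$ is not covered by your lemma, since the lemma only tracks layers with $n\ge1$ and says nothing about $V_j(\CH^o)$; there you genuinely need the vacuum condition, via $\langle U^*(\xi\otimes\epsilon_j),\OH\otimes\OK\rangle=\langle\xi\otimes\epsilon_j,\OH\otimes\OP\rangle=0$ for $\xi\perp\OH$. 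Without this invariance, $\underline{V}$ is not yet a row isometry \emph{on} $(\CH\otimes\CK_\infty)^o$, which is the first requirement of an outgoing Cuntz scattering system.

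The genuine gap is that the identification $\CE_*=W^*\CY$ --- a stated conclusion of the theorem, and the one used later in Theorem \ref{thm:M} and Corollary \ref{cor:equi} --- is not addressed at all. Your layers describe the row shift on $\CG$ with wandering subspace $\CE$, but the theorem also asks for the wandering subspace $\CE_*=\CL\ominus\operatorname{span}_jV_j\CL$ of the \emph{whole} row isometry, and that computation is of a different nature: one must show that $W^*\CY$ is orthogonal to $\operatorname{span}_jV_j(\CH\otimes\CK_\infty)^o$ (an inner-product computation using $\widehat{W}^*=\lim U_1^*\cdots U_n^*$ from Proposition \ref{prop:W}) and, for maximality, that any vector orthogonal to both $W^*\CY$ and $\operatorname{span}_jV_j(\CH\otimes\CK_\infty)^o$ is orthogonal to $\operatorname{span}_j\widehat{V}_j(\CH\otimes\CK_\infty)$, which is everything because $\underline{\widehat{V}}$ is a row \emph{unitary}. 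None of this follows from the $M_n$-filtration, so as it stands the proposal proves the decomposition $(\CH\otimes\CK_\infty)^o=\CH^o\oplus\CG$ and condition (2) of the definition, but not the full statement.
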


\begin{proof}

The $\widehat{V}_j$ are isometries with orthogonal ranges because the $\epsilon_j$ form an orthonormal set and $U^*$ is isometric.  Note that because the $\epsilon_j$ form a basis, $\underline{\widehat{V}}$ is even a row unitary, i.e., 
$span_{j=1,\ldots,d}\; \widehat{V}_j (\CH \otimes \CK_\infty) = \CH \otimes \CK_\infty$.

Now suppose that $\sum_i \xi_i \otimes \eta_i \in (\CH \otimes \CK_\infty)^o$ with
$\xi_i \in \CH$ and $\eta_i \in \CK_\infty$. Further suppose
that $\zeta = \OH \otimes \zeta_1^\CP \otimes \Omega_{[2,\infty)}^\CP
\in \OH \otimes \CP_\infty \simeq \CP_\infty$. Then for all $j$
\begin{eqnarray*}
&& \langle\, V_j \sum_i \xi_i \otimes \eta_i,\,
W^* \,\zeta \,\rangle
= \langle\, \sum_i U^* (\xi_i \otimes \epsilon_j) \otimes \eta_i,
\, \lim_{n\to\infty} U^*_1\ldots U^*_n\; \OH \otimes \zeta_1^\CP \otimes \Omega_{[2,\infty)}^\CP \,\rangle \\
&=& \langle\, \sum_i \xi_i \otimes \epsilon_j \otimes \eta_i,
\, \OH \otimes \zeta_1^\CP \otimes \Omega_{[2,\infty)}^\CK \,\rangle = 0
\end{eqnarray*}
because $\sum_i \xi_i \otimes \eta_i \perp \OH \otimes \OKi$.

By choosing $\zeta_1^\CP = \Omega_1^\CP$ we conclude that $(\CH \otimes \CK_\infty)^o$ is invariant for all $V_j$ and hence $\underline{V}$ is a row isometry on $(\CH \otimes \CK_\infty)^o$. 

Clearly $W^* \CY \subset \CH \otimes \CK_1$ and $W^* \CY \perp
\OH \otimes \OKi$, hence also
$W^* \CY \subset (\CH \otimes \CK_\infty)^o$. 

By choosing $\zeta_1^\CP \perp \Omega_1^\CP$,
i.e. $\OH \otimes \zeta_1^\CP \otimes \Omega_{[2,\infty)} \in \CY$, we conclude that $W^* \CY$ is orthogonal to $span_{j=1,\ldots,d}\; V_j (\CH \otimes \CK_\infty)^o$.
To prove that $W^* \CY$ is equal to the wandering subspace $\CE_*$ of $\underline{V}$ it remains to be shown that no other vectors are orthogonal to $span_{j=1,\ldots,d}\; V_j (\CH \otimes \CK_\infty)^o$. In fact, suppose $\eta \in (\CH \otimes \CK_\infty)^o$ is orthogonal not only to $span_{j=1,\ldots,d}\; V_j (\CH \otimes \CK_\infty)^o = span_{j=1,\ldots,d}\; \widehat{V}_j (\CH \otimes \CK_\infty)^o$ but also to $W^* \CY$. The latter (together with $\eta \in (\CH \otimes \CK_\infty)^o$) means that for all 
$\zeta_1^\CP \in \CP_1$
\[
\eta \perp U^* (\OH \otimes \zeta_1^\CP) \otimes \Omega_{[2,\infty)}^\CK
\] 
which gives $\eta \perp span_{j=1,\ldots,d}\; \widehat{V}_j
(\OH \otimes \OKi)$. Combined with $\eta \perp span_{j=1,\ldots,d}\, \widehat{V}_j (\CH \otimes \CK_\infty)^o$
this implies
\[
\eta \perp span_{j=1,\ldots,d}\, \widehat{V}_j (\CH \otimes \CK_\infty)
= \CH \otimes \CK_\infty\;.
\]
The last equality follows from the fact that
$\underline{\widehat{V}}$ is a row unitary. We conclude that $\eta = 0$. 

Further it is clear that 
$\CE = \CH \otimes (\Omega_1^\CK)^\perp \otimes \Omega_{[2,\infty)}^\CK$
is contained in $(\CH \otimes \CK_\infty)^o$. 
Now we prove that $V_\alpha \CE$ and $V_\beta \CE$ are orthogonal to each other if $\alpha \not= \beta$ in $F^+_d$. If $|\alpha| = |\beta|$ but
$\alpha \not= \beta$
then even the ranges of $V_\alpha$ and $V_\beta$ are orthogonal (because $\underline{V}$ is a row isometry). But if, say,  $|\alpha| > |\beta|$ then $V_\alpha \CE \perp V_\beta \CE$ follows by considering the form of $\CE$ (consider the inner product at the tensor factor $\CK_{|\alpha|+1}$).
Finally, to see that $\CG = \bigoplus_{\alpha \in F^+_d} V_\alpha \CE
= (\CH \otimes \CK_\infty)^o \ominus \CH^o$ check by induction that
for all $n \in \Nset$
\[
\CH \otimes \CK_{[1,n]} \;=\;
\big( \CH \otimes \OKi \big) 
\oplus \big( \CH \otimes (\Omega_1^\CK)^\perp \otimes \Omega_{[2,\infty)}^\CK \big)
\oplus \big( \CH \otimes \CK_1 \otimes (\Omega_2^\CK)^\perp \otimes \Omega_{[3,\infty)}^\CK \big) \oplus \ldots
\]
\[
\ldots 
\oplus \big( \CH \otimes \CK_{[1,n-1]} \otimes (\Omega_n^\CK)^\perp \otimes \Omega_{[n+1,\infty)}^\CK \big)
\]
\[
= \big( \Cset\,\OH \oplus \CH^o \big)
\oplus \CE \oplus \bigoplus^d_{j=1} V_j \CE \oplus
\ldots 
\oplus \bigoplus_{|\alpha|=n-1} V_\alpha \CE\;.
\]
Restricting to $(\CH \otimes \CK_\infty)^o$ and with $n\to\infty$ we obtain the result.
\end{proof}

Note that
if $d = dim\, \CP \ge 2$ then $dim\, \CE_* = dim \CY = d-1 \ge 1$
and the scattering system $\big( (\CH \otimes \CK_\infty)^o, \underline{V} = (V_1,\ldots,V_d), \CG^+_*, \CG \big)$ is never trivial.
\\

In any outgoing Cuntz scattering system it is interesting to examine the relative position of the two embedded row shifts. In our case, having identified the wandering subspace $\CE_*$ as $W^* \CY$, it is more convenient to work instead with an equivalent row shift with wandering subspace $\CY$. The following considerations help to keep track of the relative position and will be used in the analysis of this problem in the following sections.

\begin{proposition} \label{prop:S}
Let $\underline{\widehat{S}} = (\widehat{S}_1,\ldots,\widehat{S}_d)$ with
\begin{eqnarray*} 
\widehat{S}_j: \;\CP_\infty &\rightarrow& \CP_\infty \\
\zeta &\mapsto& \epsilon_j \otimes \zeta \quad 
(\in \CP_1 \otimes \CP_{[2,\infty)})
\end{eqnarray*}
Then for $j = 1,\ldots,d$
\[
\widehat{S}_j\, \widehat{W} = \widehat{W} \, \widehat{V}_j
\]
If $\underline{S} = (S_1,\ldots,S_d)$ is the restriction to
$(\CP_\infty)^o$ then
\[
S_j\, W = W \,V_j
\]
\end{proposition}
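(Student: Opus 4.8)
The plan is to prove the unrestricted identity $\widehat{S}_j \widehat{W} = \widehat{W}\,\widehat{V}_j$ first and then read off the restricted one $S_j W = W V_j$ by restriction. It is most convenient to pass to adjoints, so that the goal becomes
\[
\widehat{V}_j^*\,\widehat{W}^* = \widehat{W}^*\,\widehat{S}_j^*,
\]
and to exploit the explicit description of $\widehat{W}^*$ from Proposition \ref{prop:W}. Both sides are bounded, so it suffices to verify the equality on the dense subspace $\bigcup_N \CP_{[1,N]}$ of $\CP_\infty$. For $\zeta \in \CP_{[1,N]}$ the vacuum-invariance of $U$ makes the sequence defining $\widehat{W}^*$ stabilise, giving the honest finite product $\widehat{W}^*\zeta = U_1^* \cdots U_N^*\,\zeta$; this removes all limit issues from the core computation.

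The key structural input is a factorisation of $\widehat{V}_j$ through the first interaction. Writing $\widehat{G}_j \colon \CH \otimes \CK_\infty \to \CH \otimes \CP_1 \otimes \CK_{[2,\infty)}$ for the operator that creates $\epsilon_j$ in the first $\CP$-slot and right-shifts the $\CK$-beam by one (so that $\widehat{G}_j(\xi \otimes \eta) = \xi \otimes \epsilon_j \otimes \eta$ with $\eta$ moved into $\CK_{[2,\infty)}$), one checks directly from the definitions that $\widehat{V}_j = U_1^*\,\widehat{G}_j$, hence $\widehat{V}_j^* = \widehat{G}_j^*\,U_1$. Applying this to $\widehat{W}^*\zeta = U_1^* \cdots U_N^*\zeta$ and cancelling $U_1 U_1^* = \eins$ — legitimate because $U_2^* \cdots U_N^*\zeta$ lands in $\CH \otimes \CP_1 \otimes \CK_{[2,\infty)}$, the full range of the unitary $U_1$ — reduces the left-hand side to $\widehat{G}_j^*\,U_2^* \cdots U_N^*\,\zeta$.

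It then remains to recognise this as $\widehat{W}^*\widehat{S}_j^*\zeta$. Since $U_2,\ldots,U_N$ act trivially on the first slot, expanding $\zeta = \sum_k \epsilon_k \otimes \zeta^{(k)}$ shows that $U_2^* \cdots U_N^*\zeta$ keeps the $\CP_1$-factor intact; then $\widehat{G}_j^*$ selects the $k=j$ summand, deletes the $\epsilon_j$ in $\CP_1$, and shifts the beam back by one. The shift-covariance of the repeated interaction — namely that conjugating $U_\ell$ by the one-step shift produces $U_{\ell+1}$ — identifies the down-shifted $U_2^* \cdots U_N^*\zeta^{(j)}$ with $U_1^* \cdots U_{N-1}^*$ applied to the down-shifted $\zeta^{(j)}$. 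As this down-shifted vector is exactly $\widehat{S}_j^*\zeta$, the right-hand side $\widehat{W}^*\widehat{S}_j^*\zeta$ emerges, completing the equality on the dense subspace and hence, by continuity, everywhere.

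Finally, the restricted statement follows by restriction once one checks that all three operators respect the relevant vacuum-complement subspaces: $\underline{V}$ leaves $(\CH \otimes \CK_\infty)^o$ invariant (Theorem \ref{thm:cuntz}), $\widehat{W}$ maps $(\CH \otimes \CK_\infty)^o$ into $(\CP_\infty)^o$ (Proposition \ref{prop:W}), and $\widehat{S}_j$ preserves $(\CP_\infty)^o$ because $\langle \widehat{S}_j \zeta, \OPi \rangle = \langle \epsilon_j, \OP \rangle\,\langle \zeta, \OPi \rangle$ vanishes for $\zeta \perp \OPi$. Restricting $\widehat{S}_j \widehat{W} = \widehat{W}\widehat{V}_j$ to $(\CH \otimes \CK_\infty)^o$ then yields $S_j W = W V_j$. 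The hard part will be the careful bookkeeping of the various shifts — verifying $\widehat{V}_j = U_1^*\widehat{G}_j$, matching domains so that $U_1 U_1^* = \eins$ may be invoked, and tracking the shift-covariance through the adjoints — rather than any conceptual difficulty.
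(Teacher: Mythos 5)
Your proposal is correct and takes essentially the same route as the paper: both arguments reduce to the explicit formula $\widehat{W}^*\zeta = U_1^*\cdots U_N^*\zeta$ on the dense set of finitely supported $\zeta \in \CP_{[1,N]}$, cancel the first interaction $U_1$ against the $U^*$ built into $\widehat{V}_j$, and then peel off the $\epsilon_j$-component of the first tensor slot to recognise a shifted copy of $\widehat{W}^*$. The only difference is presentational — you state it as an operator identity between the adjoints $\widehat{V}_j^*\widehat{W}^* = \widehat{W}^*\widehat{S}_j^*$ on a dense domain, whereas the paper writes the identical computation as a chain of inner products against elementary tensors — and your explicit check that the three operators respect the vacuum-complement subspaces is a slightly more careful justification of the restriction step than the paper's one-line remark.
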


\begin{proof}
Note that for $\zeta_\ell \in \CP_\ell$
\[
(\widehat{S}_j)^* \; \zeta_1 \otimes \zeta_2 \otimes \ldots
= \langle \epsilon_j, \zeta_1 \rangle \, \zeta_2 \otimes \ldots
\]
and hence we obtain for $\xi \in \CH,\; \eta \in \CK_\infty, \zeta_i \in \CP_i$
\begin{eqnarray*}
& & \langle \; (\widehat{W} \widehat{V}_j)\, \xi \otimes \eta,\;
\zeta_1 \otimes \zeta_2 \otimes \ldots \otimes \zeta_n \otimes \Omega_{[n+1,\infty)}^\CP
\; \rangle \\
&=& \langle \; \widehat{V}_j\, \xi \otimes \eta,\;
\widehat{W}^* \zeta_1 \otimes \zeta_2 \otimes  \ldots \otimes \zeta_n \otimes \Omega_{[n+1,\infty)}^\CP
\; \rangle \\
&=& \langle \; U^*(\xi \otimes \epsilon_j) \otimes \eta,\;
U_1^* \ldots U_n^*\, \OH \otimes \zeta_1 \otimes \zeta_2 \otimes \ldots \otimes \zeta_n \otimes \Omega_{[n+1,\infty)}^\CK
\; \rangle \\
&=& \langle \; \xi \otimes \epsilon_j \otimes \eta,\;
U_2^* \ldots U_n^*\, \OH \otimes \zeta_1 \otimes \zeta_2 \otimes \ldots \otimes \zeta_n \otimes \Omega_{[n+1,\infty)}^\CK
\; \rangle \\
&=& \langle \epsilon_j, \zeta_1 \rangle \;
\langle \; \xi \otimes \eta,\;
\widehat{W}^* \zeta_2 \otimes \ldots \otimes \zeta_n \otimes \Omega_{[n,\infty)}^\CP
\; \rangle \\
&=& \langle \; (\widehat{S}_j \widehat{W})\, \xi \otimes \eta,\;
\zeta_1 \otimes \zeta_2 \ldots \zeta_n \otimes \Omega_{[n+1,\infty)}^\CP
\; \rangle
\end{eqnarray*}
The intertwining relation for the restrictions is a direct consequence.
\end{proof}

\begin{corollary} \label{cor:equi}
$\underline{\widehat{S}}$ is a row unitary but $\underline{S}$
is a row shift with wandering subspace 
$(\Omega_1^\CP)^\perp \otimes \Omega_{[2,\infty)}^\CP \simeq \CY$. 
It is equivalent to the row shift on $\CG_*^+$ in Theorem \ref{thm:cuntz} via the intertwiner $W$. In particular
\[
\CE_* = W^* \CY, \quad W\,\CE_* = \CY, \quad\quad
\CG_*^+ = W^* (\CP_\infty)^o, \quad W \CG_*^+ =  (\CP_\infty)^o
\]
\end{corollary}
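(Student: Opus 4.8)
The plan is to verify the four assertions in turn, leaning on the intertwining relations of Proposition \ref{prop:S} and on the identification $\CE_* = W^*\CY$ from Theorem \ref{thm:cuntz}. First I would dispose of $\underline{\widehat{S}}$: each $\widehat{S}_j$ is isometric (it tensors with the unit vector $\epsilon_j$), the ranges $\epsilon_j \otimes \CP_{[2,\infty)}$ are mutually orthogonal, and since the $\epsilon_j$ form a basis of $\CP_1$ they together exhaust $\CP_1 \otimes \CP_{[2,\infty)} = \CP_\infty$; hence $\underline{\widehat{S}}$ is a row unitary. A one-line check gives $\langle\, \widehat{S}_j\zeta,\, \OPi \,\rangle = \langle \epsilon_j, \OP \rangle\, \langle \zeta, \OPi \rangle$, so $(\CP_\infty)^o$ is $\underline{\widehat{S}}$-invariant and the restriction $\underline{S}$ is well defined.

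Next I would show that $\underline{S}$ is a row shift whose wandering subspace is $(\Omega_1^\CP)^\perp \otimes \Omega_{[2,\infty)}^\CP$, which under the identification $\OH \otimes \CP_\infty \simeq \CP_\infty$ is exactly $\CY$. Since the $\epsilon_j$ span $\CP_1$, one has $\bigvee_{j} S_j (\CP_\infty)^o = \CP_1 \otimes \big( \CP_{[2,\infty)} \ominus \Cset\, \Omega_{[2,\infty)}^\CP \big)$, and subtracting this from $(\CP_\infty)^o$ leaves precisely $(\Omega_1^\CP)^\perp \otimes \Omega_{[2,\infty)}^\CP$. That the $S_\alpha \CY$ are orthogonal and span $(\CP_\infty)^o$ I would obtain by running the analogue, with $\CP$ in place of $\CK$, of the telescoping identity from the end of the proof of Theorem \ref{thm:cuntz}: for each $n$ the pieces $S_\alpha \CY$ with $|\alpha| = \ell < n$ reassemble $\CP_{[1,\ell]} \otimes (\Omega_{\ell+1}^\CP)^\perp \otimes \Omega_{[\ell+2,\infty)}^\CP$, and letting $n \to \infty$ yields $(\CP_\infty)^o = \bigoplus_{\alpha \in F^+_d} S_\alpha \CY$.

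With these in hand the equivalence of the two row shifts is almost formal. From $\CE_* = W^*\CY$ and $W W^* = \eins$ on $(\CP_\infty)^o \supseteq \CY$ I get $W\CE_* = W W^* \CY = \CY$, and $W|_{\CE_*}$ is the isometric inverse of $W^*|_{\CY}$. Iterating $S_j W = W V_j$ gives $W V_\alpha = S_\alpha W$, so $W V_\alpha \CE_* = S_\alpha \CY$. Because $\underline{V}|_{\CG^+_*}$ and $\underline{S}$ are row shifts, for $\xi \in \CE_*$ one has $\| W V_\alpha \xi \| = \| S_\alpha W \xi \| = \| W \xi \| = \| \xi \| = \| V_\alpha \xi \|$, while distinct words $\alpha$ send $\CE_*$ into the orthogonal summands $S_\alpha \CY$. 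Hence $W$ maps the orthogonal decomposition $\CG^+_* = \bigoplus_\alpha V_\alpha \CE_*$ isometrically onto $\bigoplus_\alpha S_\alpha \CY = (\CP_\infty)^o$, so $W|_{\CG^+_*}$ is unitary and intertwines $\underline{V}|_{\CG^+_*}$ with $\underline{S}$. This already delivers $W\CE_* = \CY$ and $W \CG^+_* = (\CP_\infty)^o$.

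Finally, the identity $\CG^+_* = W^*(\CP_\infty)^o$ would follow from a short norm argument, which is the one place demanding care because the global intertwiner $W$ is only a coisometry and a priori carries a large kernel. Writing $W_0 := W|_{\CG^+_*}$, for $m \in (\CP_\infty)^o$ and $g \in \CG^+_*$ one has $\langle W^* m, g \rangle = \langle m, W_0 g \rangle = \langle W_0^* m, g \rangle$, so the $\CG^+_*$-component of $W^* m$ is $W_0^* m$; writing $W^* m = W_0^* m + r$ with $r \perp \CG^+_*$ and using that both $W^*$ and the unitary $W_0^*$ are isometric gives $\| m \|^2 = \| W^* m \|^2 = \| W_0^* m \|^2 + \| r \|^2 = \| m \|^2 + \| r \|^2$, forcing $r = 0$; thus $W^* m = W_0^* m \in \CG^+_*$ and $W^*(\CP_\infty)^o = \CG^+_*$. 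I expect the main obstacle to be exactly this confinement to $\CG^+_*$: since $W$ is a coisometry rather than an isometry, one must invoke the row-shift structure together with the identification $W\CE_* = \CY$ to certify that $W$ is isometric on $\CG^+_*$ (equivalently that $\CG^+_* \perp \ker W$), after which surjectivity and all the stated identities are immediate.
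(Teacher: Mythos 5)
Your argument is correct and follows exactly the route the paper intends: the corollary is stated there without a written proof, as an immediate consequence of Proposition \ref{prop:S} and Theorem \ref{thm:cuntz}, and your writeup supplies precisely the verifications the author leaves to the reader --- the Fock-type decomposition $(\CP_\infty)^o = \bigoplus_{\alpha\in F^+_d} S_\alpha \CY$, the iterated intertwining $W V_\alpha = S_\alpha W$, and the identification $\CE_* = W^*\CY$, including the one genuinely non-trivial point that $W$, though only a coisometry globally, is isometric on $\CG_*^+$. Only a cosmetic caution: your auxiliary symbol $W_0$ for $W|_{\CG_*^+}$ collides with the paper's later use of $W_0$ for the observability operator in Definition \ref{def:observ}, so it should be renamed if incorporated.
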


\section{$F^+_d$-linear systems and transfer functions}
\label{section:transfer}

For our model of an interaction $U$ with vacuum vectors $\OH,\, \OK, \,\OP$ we now want to study certain generalizations of linear systems theory which turn out to be closely connected to the outgoing Cuntz scattering system
$\big( (\CH \otimes \CK_\infty)^o,\, \underline{V} = (V_1,\ldots,V_d),\, \CG^+_*,\, \CG \big)$
which we have constructed in the last section.
We define
\begin{itemize}
\item[]
{\it the input space} 
$\quad \quad \CU := \CE = \CH \otimes (\Omega_1^\CK)^\perp \otimes \Omega_{[2,\infty)}^\CK \quad \subset (\CH \otimes \CK_\infty)^o$,

\vspace{0.1cm}

\item[]
{\it the output space}
$\quad \; \CY := (\Omega_1^\CP)^\perp \otimes \Omega_{[2,\infty)}^\CP
\quad \subset (\CP_\infty)^o$
\end{itemize}

With $H \otimes \CK = \CH \oplus \CU$ the interaction $U$ maps
$\CH \oplus \CU$ onto $\CH \otimes \CP$ which contains $\CY$
(identifying $\CP$ and $\CP_1$). Hence for $j=1,\ldots,d$ we can define 
\[
A_j: \CH \rightarrow \CH,\quad
B_j: \CU \rightarrow \CH,\quad
C  : \CH \rightarrow \CY,\quad
D  : \CU \rightarrow \CY
\]
by 
\begin{eqnarray*}
U (\xi \oplus \eta) &=:& \sum^d_{j=1} \big(A_j \xi + B_j \eta \big) \otimes \epsilon_j \\
P_\CY\, U (\xi \oplus \eta) &=:& C \xi + D \eta,
\end{eqnarray*}
where $\xi \in \CH,\, \eta \in \CU$ and $\big(\epsilon_j\big)^d_{j=1}$
is an orthonormal basis of $\CP$ and $P_\CY$ is the orthogonal projection onto $\CY$ (such a notation will also be used for other subspaces in the following). 

\begin{remark} \normalfont \label{rem:eigen}
Note that we found $A_1, \ldots, A_d \in \BH$ earlier when we considered the transition operator 
$Z(\cdot) = \sum_j A^*_j \cdot A_j: \BH \rightarrow \BH$ of the non-commutative Markov chain. The setting is not so special as it may seem on first glance. In fact, if
$(A^*_1,\ldots,A^*_d): \bigoplus^d_{j=1} \CH \rightarrow \CH$
is an arbitrary row contraction then we can use dilation theory to construct an isometric dilation $(V_1,\ldots,V_d)$ (see \cite{Po89a} for this kind of dilation theory) and to construct Hilbert spaces $\CK$ and $\CP$ and a unitary $U: \CH \otimes \CK \rightarrow \CH \otimes \CP$ which is related to $A_1,\ldots,A_d$ as above. See \cite{Go04} or \cite{DG07}, Section 1, where such a coisometry $U$ is explicitly constructed which can be extended to a unitary on enlarged spaces. It is a consequence of the existence of vacuum vectors in Definition \ref{def:interaction} that $\OH$ is a common eigenvector
for the tuple $A_1,\ldots,A_d$ (see \cite{Go04}, A.5.1) and hence, conversely,
if one starts with the tuple one has to assume this property in order to arrive at the setting of Definition \ref{def:interaction}.
\end{remark}

Further we define
\[
\CC_U :=
\left( \begin{array}{cc}
        A_1 & B_1 \\ 
        \vdots & \vdots \\
        A_d & B_d \\ 
        C & D
        \end{array} 
\right): 
\quad
\CH \oplus \CU \rightarrow \bigoplus^d_{j=1} \CH \oplus \CY
\]
which is called a colligation (of operators). 

As usual, the colligation $\CC_U$ gives rise to a {\it $F^+_d$-linear system} $\Sigma_U$ (also called a non-commutative Fornasini-Marchesini system in \cite{BGM06}, referring to \cite{FM78}), given by
\begin{eqnarray*}
x(j\alpha) &=& A_j \,x(\alpha) + B_j \, u(\alpha) \\
y(\alpha)  &=&  \; C \,x(\alpha)\, + \,  D \, u(\alpha),
\end{eqnarray*}
where $j=1,\ldots,d$, further $\alpha,\,j \alpha$ (concatenation) are words
in $F^+_d$ and 
\[
x: F^+_d \rightarrow \CH,\quad
u: F^+_d \rightarrow \CU,\quad
y: F^+_d \rightarrow \CY. 
\]
Given $x(\emptyset)$ and $u$ we can use $\Sigma_U$ to compute $x$ and $y$ recursively.

\setlength{\unitlength}{1cm}
\begin{picture}(15,6.5)
\put(2,3.2){\line(1,1){2.8}}
\put(2,3.2){\line(1,-1){2.8}}
\put(3.5,4.7){\line(2,-1){1.5}}
\put(3.5,1.7){\line(2,1){1.5}}
\put(4.5,5.7){\line(3,-1){0.5}}
\put(4.5,4.2){\line(3,1){0.5}}
\put(4.5,2.2){\line(3,-1){0.5}}
\put(4.5,0.7){\line(3,1){0.5}}
\put(1.7,3.1){$\emptyset$}
\put(3.4,4.9){$1$}
\put(3.4,1.9){$2$}
\put(4.3,6){$11$}
\put(4.3,4.4){$21$}
\put(4.3,2.4){$12$}
\put(4.3,0.9){$22$}
\put(5.7,3.1){$\ldots$}
\put(8,3.1){dyadic tree for $d=2$}
\end{picture}

A very elegant way to encode all the information about the evolution of an $F^+_d$-linear system into a single mathematical object is the use of a transfer function. For this we define the `Fourier transform' of $x$ as
\[
\hat{x}(z) = \sum_{\alpha\in F^+_d} x(\alpha) z^\alpha,
\]
where $z^\alpha = z_{\alpha_n} \ldots z_{\alpha_1}$ if $\alpha =
\alpha_n \ldots \alpha_1 \in F^+_d$ and $z = (z_1,\ldots,z_d)$
is a $d$-tuple of formal non-commuting indeterminates. Similarly
$\hat{u}(z) = \sum_{\alpha\in F^+_d} u(\alpha) z^\alpha$ and
$\hat{y}(z) = \sum_{\alpha\in F^+_d} y(\alpha) z^\alpha$.

Then it is easy to check that, if $x(\emptyset) = 0$, we have the
{\it input-output relation}
\[
\hat{y}(z) = \Theta_U(z)\, \hat{u}(z)
\]
where
\[
\Theta_U(z) := \sum_{\alpha \in F^+_d} \Theta_U^{(\alpha)} z^\alpha :=
D \, + \, C \sum_{\stackrel{\beta \in F^+_d}{j=1,\ldots,d}}
A_\beta B_j z^{\beta j} 
\]
The $\Theta_U^{(\alpha)}$ are operators from $\CU$ to $\CY$. Multiplication of the $z$-variables is done by concatenation of exponents and the coefficients are always assumed to commute with the $z$-variables.
We call the formal non-commutative power series $\Theta_U$ the {\it transfer function} associated to the interaction $U$. 

Now we want to proceed from formal power series to operators between Hilbert spaces.

\begin{theorem} \label{thm:M}
The input-output relation
\[
\hat{y}(z) = \Theta_U(z)\, \hat{u}(z)
\]
corresponds to a contraction
\[
M_{\Theta_U}: \ell^2(F^+_d,\CU) \rightarrow \ell^2(F^+_d,\CY)
\]
which (with $x(\emptyset) = 0$) maps an input sequence
$u$ to the corresponding output sequence $y$. 
\end{theorem}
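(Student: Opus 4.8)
The plan is to realise $M_{\Theta_U}$ as a compression of the coisometry $\widehat{W}$ of Proposition \ref{prop:W}, read off in the two natural Fourier representations attached to the row shifts of the scattering system. Contractivity of $\widehat{W}$ then delivers the asserted bound, and the boundedness of $M_{\Theta_U}$ comes out for free, rather than by estimating the formal series $\Theta_U$ directly.

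First I would set up the two unitary identifications. Since $\underline{V}|_{\CG}$ is a row shift with wandering subspace $\CE = \CU$ (Theorem \ref{thm:cuntz}), the map
\[
\Phi_\CU:\ \ell^2(F^+_d,\CU)\longrightarrow\CG,\qquad u\longmapsto\sum_{\alpha\in F^+_d} V_\alpha\,u(\alpha),
\]
is unitary, because the subspaces $V_\alpha\CU$ are mutually orthogonal and each $V_\alpha|_\CU$ is isometric. Likewise, by Corollary \ref{cor:equi} the tuple $\underline{S}$ is a row shift on $(\CP_\infty)^o$ with wandering subspace $\CY$, so
\[
\Phi_\CY:\ \ell^2(F^+_d,\CY)\longrightarrow(\CP_\infty)^o,\qquad y\longmapsto\sum_{\alpha\in F^+_d} S_\alpha\,y(\alpha),
\]
is unitary as well. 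Both send an $\ell^2$-sequence to the element of the corresponding shift space whose ``Fourier coefficients'' it records.

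The core step is the operator identity
\[
M_{\Theta_U}=\Phi_\CY^{\,*}\,\bigl(W|_{\CG}\bigr)\,\Phi_\CU .
\]
Starting from an input $u$ with $x(\emptyset)=0$, I would compute $W\Phi_\CU(u)=\sum_\alpha W V_\alpha\,u(\alpha)=\sum_\alpha S_\alpha\,\bigl(W u(\alpha)\bigr)$, using the intertwining $W V_j = S_j W$ of Proposition \ref{prop:S} iterated to $W V_\alpha = S_\alpha W$. Expanding each $W u(\alpha)\in(\CP_\infty)^o$ in the $\underline{S}$-shift and collecting terms by the concatenation $\alpha\gamma=\delta$ exhibits the $\CY$-coefficients of $W\Phi_\CU(u)$; the remaining work is to check that these coefficients obey the recursion of $\Sigma_U$, i.e.\ that they equal the outputs $y(\delta)=C\,x(\delta)+D\,u(\delta)$ with $x(j\alpha)=A_j x(\alpha)+B_j u(\alpha)$. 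This amounts to matching the single-step action of $\widehat{W}$ on the wandering subspaces against the colligation $\CC_U$: the component of $\widehat{W}\,\widehat{V}_j$ landing in $\CY$ reproduces $C$ and $D$, the part re-entering $\CH$ reproduces the state-update $A_j,B_j$, and the geometric series defining $\Theta_U$ is exactly the bookkeeping of the telescoped intertwining together with the description of $\widehat{W}$ as a strong limit of $Q_n U(n)$. I expect this identification to be the \emph{main obstacle}: it is a careful but essentially routine unwinding of the definitions of $A_j,B_j,C,D$, of $V_j$, and of $\widehat{W}$.

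Granting the identity, the conclusion is immediate. By Proposition \ref{prop:W} the operator $\widehat{W}$ is a coisometry, so $\|W g\|\le\|g\|$ for every $g\in\CG$, whence
\[
\|M_{\Theta_U}u\|^2=\|W\Phi_\CU(u)\|^2\le\|\Phi_\CU(u)\|^2=\|u\|^2 ,
\]
the two outer equalities holding because $\Phi_\CU$ and $\Phi_\CY$ are unitary. Thus $M_{\Theta_U}$ is a well-defined contraction from $\ell^2(F^+_d,\CU)$ to $\ell^2(F^+_d,\CY)$ which sends $u$ to the corresponding output $y$, as claimed.
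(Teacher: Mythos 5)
Your argument is correct, but it is not the route the paper takes to prove this theorem. The paper's own proof works at the level of colligations: it replaces $\CC_U$ (which is not contractive) by the reduced colligation $\CC^o_U$ on $\CH^o\oplus\CU$, shows that $\CC^o_U$ is unitary by matching it, via $U^*$, with the unitary colligation that \cite{BV05} attaches to the outgoing Cuntz scattering system of Theorem \ref{thm:cuntz}, and then obtains contractivity of $M_{\Theta_U}$ by summing the resulting energy-balance identities $\sum_j |x^o(j\alpha)|^2-|x^o(\alpha)|^2=|u(\alpha)|^2-|y(\alpha)|^2$ over all words. Your proof instead goes global from the start: you conjugate the coisometry $W$ by the Fourier-type unitaries attached to the two row shifts and read off $M_{\Theta_U}$ as a corner of $W$. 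This is essentially the commuting-diagram argument that the paper itself gives later, in Theorem \ref{thm:comm}, after which it remarks that the diagram ``provides another proof for Theorem \ref{thm:M}'' --- so you have anticipated the paper's second proof rather than its first. Each approach buys something: the colligation proof yields the stronger local fact that $\CC^o_U$ is unitary (the dissipation equality), which is what ties the system to the \cite{BV05} formalism, whereas your route gets contractivity for free from $\|W\|\le 1$ but shifts all the work into identifying the symbol of $\Phi_\CY^*\,(W|_\CG)\,\Phi_\CU$ with $\Theta_U$. You correctly flag that identification as the main obstacle but only sketch it; to close it you need precisely the computation $P_\alpha W\eta=P_\alpha U(n)\eta=CA_{\alpha_{n-1}}\cdots A_{\alpha_2}B_{\alpha_1}\eta$ (resp.\ $D\eta$ for $\alpha=\emptyset$) carried out in the proof of Theorem \ref{thm:comm}, and you should also be careful with the word-reversal convention: the paper's identifications send $V_\alpha\eta\mapsto\eta\,z^{\overline\alpha}$ and $S_\alpha\zeta\mapsto\zeta\,z^{\overline\alpha}$, so your $\Phi_\CU$, $\Phi_\CY$ as written differ from $\Gamma_\CK^{-1}$, $\Gamma_\CP^{-1}$ by the (unitary) reversal map; this does not affect contractivity but does matter for matching the coefficients $\Theta_U^{(\alpha)}$ and the right-translation intertwining that makes the operator multi-analytic.
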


\begin{proof}
We give a proof with colligations which illustrates the connections with the outgoing Cuntz scattering system constructed in the previous section.
One would like to use the colligation $\CC_U$ but one quickly observes that $\CC_U$ is in general not contractive. In fact, the colligation obtained by removing the last row $(C,\,D)$ is already unitary. We note however 
that the state variables $x(\alpha) \in \CH$ can be changed arbitrarily by scalar multiples of $\OH$ without changing the input-output map (because $y(\alpha) = P_\CY \,U \big(x(\alpha)+u(\alpha)\big)$ and $P_\CY\, U \big(\OH \otimes \OK \big) = 0$). Hence we can replace the colligation $\CC_U$ by the colligation
\[
\CC^o_U :=
\left( \begin{array}{cc}
        A_1^o & B_1^o \\ 
        \vdots & \vdots \\
        A_d^o & B_d^o \\ 
        C^o & D
        \end{array} 
\right): 
\quad
\CH^o \oplus \CU \rightarrow \bigoplus^d_{j=1} \CH^o \oplus \CY
\]
where $\CH^o := \CH \ominus \Cset \OH$ and 
\[
A_j^o: \CH^o \rightarrow \CH^o,\quad
B_j^o: \CU \rightarrow \CH^o,\quad
C^o  : \CH^o \rightarrow \CY,\quad
D    : \CU \rightarrow \CY
\]
are restrictions resp. compressions of $A_j,\,B_j,\,C,\,D$. 

Now recall from \cite{BV05}, Chapter 5.2, that given an outgoing Cuntz scattering system $\big( (\CH \otimes \CK_\infty)^o,\, \underline{V} = (V_1,\ldots,V_d),\, \CG^+_*,\, \CG \big)$ such that $(\CH \otimes \CK_\infty)^o = \CH^o \oplus \CG$ (as constructed by us in Theorem \ref{thm:cuntz})
we can associate a unitary colligation
\[
\left( \begin{array}{cc}
        \tilde{A}_1 & \tilde{B}_1 \\ 
        \vdots & \vdots \\
        \tilde{A}_d & \tilde{B}_d \\ 
        \tilde{C} & \tilde{D}
        \end{array} 
\right): 
\quad
\CH^o \oplus \CE \rightarrow \bigoplus^d_{j=1} \CH^o \oplus \CE_*
\]
by $(\tilde{A}_j,\,\tilde{B}_j) = P_{\CH^o} V^*_j |_{\CH^o \oplus \CE},\;
(\tilde{C},\, \tilde{D}) = P_{\CE_*} |_{\CH^o \oplus \CE}$. 
Unitarity follows directly from the geometry of the outgoing Cuntz scattering system, see \cite{BV05}.

Now observe that
$(A_j^o,\, B_j^o) = P_{\CH^o \otimes \epsilon_j} U |_{\CH^o \oplus \CE}$ (identifying $\CH^o$ and $\CH^o \otimes \epsilon_j$)
and $(C^o,\, D) = P_{\CY} U |_{\CH^o \oplus \CE}$ and hence
\begin{eqnarray*}
U^* (A_j^o,\, B_j^o) &=& U^* P_{\CH^o \otimes \epsilon_j} U |_{\CH^o \oplus \CE} = P_{U^* \CH^o \otimes \epsilon_j} |_{\CH^o \oplus \CE} \\
&=& P_{V_j \CH^o} |_{\CH^o \oplus \CE} = V_j P_{\CH^o} V^*_j |_{\CH^o \oplus \CE}
= V_j (\tilde{A}_j,\,\tilde{B}_j) \\
U^* (C^o,\, D) &=& U^* P_{\CY} U |_{\CH^o \oplus \CE}
= P_{U^* \CY} |_{\CH^o \oplus \CE} = (\tilde{C},\, \tilde{D})
\end{eqnarray*}
because $\CE_* = W^* \CY$, by Theorem \ref{thm:cuntz}, which with the identifications used here is the same as $U^* \CY$. 
It follows that the colligation $\CC_U^o$ is also unitary. This implies that $M_{\Theta_U}$ is contractive, by summing over all $\alpha$ the equations
\[
\sum^d_{j=1} |x^o(j \alpha)|^2 - |x^o(\alpha)|^2 
= |u(\alpha|^2 - |y(\alpha)|^2
\]
(with $x^o(\emptyset) = 0$), which can be
obtained from the unitary colligation $\CC_U^o$ replacing $\CC_U$.
\end{proof}

The operator $M_{\Theta_U}$ has the property that it intertwines with right translation, i.e., for all $j=1,\ldots,d$
\[
M_{\Theta_U} \big(\sum_{\alpha\in F^+_d} x(\alpha) z^\alpha\, z^j \big)
\;=\; M_{\Theta_U} \big(\sum_{\alpha\in F^+_d} x(\alpha) z^\alpha\big) \,z^j\;.
\]
Such operators have been called {\it analytic intertwining operators} in \cite{BV05} and {\it multi-analytic operators} in \cite{Po95} which refers to the fact that in the theory of these operators there are many analogues to the theory of multiplication operators by analytic functions on Hardy spaces. 
The non-commutative power series $\Theta_U$ is called the {\it symbol} of $M_{\Theta_U}$.
As discussed in the introduction it was one of the motivations for this paper to make this theory available for the study of interaction models and non-commutative Markov chains. Note further that because $M_{\Theta_U}$ is a contraction the transfer function $\Theta_U$ belongs to what in \cite{BV05}, 2.4, is called the {\it non-commutative Schur class} $S_{nc,d}\,(\CU,\CY)$. To compare with other work in the literature (for example \cite{Po95,DG07,DG}) we mention that $\ell^2(F^+_d,\CU)$ is naturally identified with a free Fock space tensored with $\CU$ in which case `multi-analytic' refers to intertwining with creation operators. While this is a very useful way to think about it we won't use it in this paper but write our formulas with the indeterminates $z$. 

To understand what the transfer function $\Theta_U$ can tell us about our physical model of interaction we construct a more explicit dictionary between the multiplicative tensor product description and the additive language of $F^+_d$-linear systems. 

We can interpret $\big( z^\alpha \big)_{\alpha \in F^+_d}$
as an orthonormal basis of $\ell^2(F^+_d, \Cset)$
and $\sum_{\alpha\in F^+_d} y(\alpha) z^\alpha$ for
square integrable coefficients $y(\alpha) \in \CY$
as a series converging to an element of $\ell^2(F^+_d, \CY)$.
It is natural to map $\zeta \in \CY$ to $\zeta z^\emptyset \in \ell^2(F^+_d, \CY)$ and, based on Corollary \ref{cor:equi}, to extend this to a unitary operator 
\begin{eqnarray*}
\Gamma_\CP: \;(\CP_\infty)^o &\rightarrow& \ell^2(F^+_d, \CY) \\
S_\alpha \zeta &\mapsto& \zeta\, z^{\overline{\alpha}}\;,
\end{eqnarray*}
where $\zeta \in \CY$ and $\overline{\alpha} = \alpha_1 \ldots \alpha_n$
is the reverse of $\alpha = \alpha_n \ldots \alpha_1 \in F^+_d$.
We have an intertwining relation
\[
\Gamma_\CP\,\big(S_\alpha\,\zeta\big)  \;=\; \big(\Gamma_\CP \,\zeta\big) z^{\overline{\alpha}}\;.
\]

Similarly, based on Theorem \ref{thm:cuntz}, we can define a unitary operator 
\begin{eqnarray*}
\Gamma_\CK: \;(\CH \otimes \CK_\infty)^o = \CH^o \oplus \CG
&\rightarrow& \CH^o \oplus \ell^2(F^+_d, \CU) \\
\xi \oplus V_\alpha \eta &\mapsto& \xi \oplus \eta\, z^{\overline{\alpha}}\;,
\end{eqnarray*}
where $\xi \in \CH^o,\,\eta \in \CU$.
Here the intertwining relation is
\[
\Gamma_\CK\,\big(V_\alpha\,\eta\big)  \;=\; \big(\Gamma_\CK \,\eta \big) z^{\overline{\alpha}}\;.
\]
 
By including the coisometry $W$ from Proposition \ref{prop:W} we can form the following useful commuting diagram.

\begin{theorem} \label{thm:comm}
Let $\Gamma_{W}$ be defined by the following commutative diagram:
\[
\xymatrix{
(\CH \otimes \CK_\infty)^o \ar[r]^{W}
\ar[d]_{\Gamma_\CK}
& (\CP_\infty)^o \ar[d]^{\Gamma_\CP}
\\
\CH^o \oplus \ell^2(F^+_d, \CU) \ar[r]^{\Gamma_W}
& \ell^2(F^+_d, \CY),
}
\]
i.e., $\Gamma_{W} = \Gamma_\CP\, W\, \Gamma_\CK^{-1}$.
Then we have
\[
\Gamma_{W} |_{\ell^2(F^+_d, \CU)} = M_{\Theta_U}
\]
\end{theorem}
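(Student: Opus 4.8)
The plan is to evaluate $\Gamma_W$ on the generating vectors $\eta\, z^{\overline{\alpha}}$ of $\ell^2(F^+_d,\CU)$, where $\eta\in\CU$ and $\alpha\in F^+_d$, and to compare the outcome with $M_{\Theta_U}(\eta\, z^{\overline{\alpha}})$. Since $\Gamma_\CK$ sends $V_\alpha\eta$ to $\eta\, z^{\overline{\alpha}}$, its inverse sends $\eta\, z^{\overline{\alpha}}$ back to $V_\alpha\eta$ with vanishing $\CH^o$-component, which is precisely the normalization $x(\emptyset)=0$ built into $M_{\Theta_U}$. Applying $W$ and invoking the intertwining relation $W V_\alpha = S_\alpha W$ (iterate Proposition~\ref{prop:S}), and then the $\Gamma_\CP$-intertwining relation recorded before the theorem, I get
\[
\Gamma_W(\eta\, z^{\overline{\alpha}}) = \Gamma_\CP\, S_\alpha\, W\eta = \big(\Gamma_\CP\, W\eta\big)\, z^{\overline{\alpha}}.
\]
Because $M_{\Theta_U}$ intertwines right translation, $M_{\Theta_U}(\eta\, z^{\overline{\alpha}}) = (M_{\Theta_U}\eta)\, z^{\overline{\alpha}}$ as well, so the whole theorem collapses to the single identity $\Gamma_\CP\, W\eta = M_{\Theta_U}\eta = \Theta_U(z)\,\eta$ for $\eta\in\CU$.

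To establish this identity I would compute $\Gamma_\CP W\eta$ coefficientwise. By Corollary~\ref{cor:equi} the tuple $\underline{S}$ is a row shift with wandering subspace $\CY$, so $W\eta\in(\CP_\infty)^o$ has a unique expansion $W\eta = \bigoplus_{\beta} S_\beta\, \zeta_\beta$ with $\zeta_\beta\in\CY$, and the defining relation of $\Gamma_\CP$ then yields $\Gamma_\CP W\eta = \sum_\beta \zeta_\beta\, z^{\overline{\beta}}$. It therefore suffices to show that $\zeta_\beta$ equals the coefficient $\Theta_U^{(\overline{\beta})}\eta$ of $z^{\overline{\beta}}$ in $\Theta_U(z)\eta$. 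Using the description of $W$ as a strong limit of $Q_n\,U(n)$ from Proposition~\ref{prop:W} together with the way $U$ acts on $\CH\otimes\OK$ and on $\CU$, as encoded by the colligation operators $A_j,B_j,C,D$, I would expand $Q_n U(n)\eta$ over the product basis of $\CP_{[1,n]}$ and read off its $S_\beta\CY$-component: fixing the first $|\beta|$ tensor legs to carry the letters of $\beta$ in the appropriate order and projecting the following leg onto $\CY$ produces $\zeta_\beta = C\,A_{\beta_1}\cdots A_{\beta_{|\beta|-1}}\,B_{\beta_{|\beta|}}\,\eta$ when $|\beta|\ge 1$ and $\zeta_\emptyset = D\eta$. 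These are exactly the coefficients delivered by running the system $\Sigma_U$ with $x(\emptyset)=0$ and the input $\eta$ placed at the root, i.e.\ $\zeta_\beta = \Theta_U^{(\overline{\beta})}\eta$, so the two power series coincide.

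The genuinely delicate part, and the main obstacle, is the bookkeeping that mediates between the multiplicative tensor-product picture and the additive $F^+_d$-system picture. One must track the reversal $\beta\mapsto\overline{\beta}$ that is hard-wired into both $\Gamma_\CK$ and $\Gamma_\CP$, match the successive tensor legs $\CP_1,\CP_2,\dots$ against the letters of $\beta$ and the terminal $\CY$-leg, and check that the compressions implicit in the definitions of $C$ (the projections onto $\OH$ and onto $\CY\simeq(\Omega_1^\CP)^\perp$) and of $D$ reproduce exactly the compression extracting $\zeta_\beta$. In addition one has to justify expanding the strong limit of Proposition~\ref{prop:W} termwise over the product basis, and to treat the empty word separately, where the feedthrough $D$ rather than a passage through $\CH$ supplies $\zeta_\emptyset$. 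As a cross-check, the colligation identities $U^*(A_j^o,B_j^o)=V_j(\tilde A_j,\tilde B_j)$ and $U^*(C^o,D)=(\tilde C,\tilde D)$ from the proof of Theorem~\ref{thm:M} already say that $\Gamma_\CK$ and $\Gamma_\CP$ carry the unitary scattering colligation to $\CC^o_U$; from that standpoint $\Gamma_W|_{\ell^2(F^+_d,\CU)}$ is by construction the multiplier attached to $\CC^o_U$, namely $M_{\Theta_U}$.
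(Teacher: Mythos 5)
Your proposal is correct and follows essentially the same route as the paper: first use the intertwining relations of $\Gamma_\CK$, $\Gamma_\CP$ and Proposition~\ref{prop:S} to reduce everything to computing $\Gamma_\CP W\eta$ for $\eta\in\CU$, then read off the coefficients via $P_\alpha W\eta = P_\alpha U(n)\eta$ (your $\zeta_\beta$ are exactly the paper's $P_{\overline{\beta}}W\eta$, and your index bookkeeping for $CA_{\beta_1}\cdots A_{\beta_{|\beta|-1}}B_{\beta_{|\beta|}}\eta=\Theta_U^{(\overline{\beta})}\eta$ checks out). The "delicate bookkeeping" you flag is precisely the short explicit computation the paper performs with the system $\Sigma_U$, so nothing essential is missing.
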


\begin{proof}
Combining the intertwining relations of $\Gamma_\CK$ and $\Gamma_\CP$
with the intertwining relation 
$S_j\, W = W \, V_j$
from Proposition \ref{prop:S} we obtain for $\eta \in \CU,\,\beta \in F^+_d,\,j = 1,\ldots,d $
\[
\Gamma_{W} \big(\eta\, z^\beta z^j \big) = \Gamma_\CP\, W\, \Gamma_\CK^{-1} \big(\eta\, z^\beta z^j\big) = \Gamma_\CP\, W\,V_j\,V_{\overline{\beta}}\, \eta 
\]
\[
= \Gamma_\CP\,S_j\,S_{\overline{\beta}}\, W \eta
= (\Gamma_\CP\,W \eta)\, z^\beta z^j = \Gamma_{W} \big(\eta\, z^\beta \big)\, z^j 
\]
and we conclude that $\Gamma_W |_{\ell^2(F^+_d, \CU)}$ is a multi-analytic operator. To find its symbol it is enough to compute
$\Gamma_W \eta$ for $\eta \in \CU$ identified with $\eta\, z^\emptyset \in \ell^2(F^+_d, \CU)$.
For $\alpha = \alpha_{n-1} \ldots \alpha_1 \in F^+_d$, so that
$n = |\alpha|+1 \ge 1$ (which for $n=1$ means $\alpha = \emptyset$),
let $P_\alpha$ be the orthogonal projection
onto
\[
\Gamma_\CP^{-1} \{ \; f \in \ell^2(F^+_d, \CY): f= \zeta z^\alpha \;
{\mbox for\, some}\; \zeta \in \CY \} = S_{\overline{\alpha}}\, \CY
\]
\[
= \epsilon_{\alpha_1} \otimes \ldots \otimes \epsilon_{\alpha_{n-1}}
\otimes (\Omega_n^\CP)^\perp \otimes \Omega_{[n+1,\infty)}
\]
(which for $n=1$ is $\CY$). 
From Proposition \ref{prop:W} we obtain
\[
P_\alpha \, W\, \eta = P_\alpha \, U(n) \eta 
= P_\alpha \, U_n \ldots U_1 \eta 
\]
and now we can explicitly compute with the associated $F^+_d$-linear system $\Sigma_U$ from Section \ref{section:transfer}
\[
P_\alpha \, U_n \ldots U_1 \eta \quad = \quad
\left\{ \begin{array}{l@{\quad {\mbox if} \quad}l}
D \eta & n=1,\, \alpha = \emptyset \\
C A_{\alpha_{n-1}} \ldots A_{\alpha_2} B_{\alpha_1} \eta
& n = |\alpha| + 1 \ge 2
\end{array} \right.
\]
But this is exactly $\Theta_U^{(\alpha)} \eta$ for the coefficient $\Theta_U^{(\alpha)}$
of the transfer function $\Theta_U$, hence
$\Gamma_W |_{\ell^2(F^+_d, \CU)} = M_{\Theta_U}$.
\end{proof}

Clearly this commuting diagram provides another proof for Theorem \ref{thm:M}.
In the rest of this section we want to use our insights 
to give a very direct interpretation of what the transfer function means in a physical model of repeated interaction based on the axioms in Definition \ref{def:interaction}. 
As in Section \ref{section:interaction} we think of
$\CH$ as the (quantum mechanical) Hilbert space of an atom, $\CK_\ell$ as the Hilbert space of a part of a light beam or field which interacts with the atom at time $\ell$. We take $\CK = \CP$ (though it makes sense to distinguish mentally between $\CK_\ell$ as the Hilbert space at time $\ell$ immediately before and $\CP_\ell$ as the Hilbert space at time $\ell$ immediately after the interaction). 
We think of $\OH$ as a vacuum state of the atom and of $\OK = \OP$ in $\CK = \CP$ as a state indicating that no photon is present. 
Then $\eta \in \CU = \CH \otimes (\Omega_1^\CK)^\perp \otimes \Omega_{[2,\infty)}^\CK \; \subset \CH \otimes \CK_\infty$ represents a vector state with photons arriving at time $1$ and stimulating an interaction between the atom and the field, but no further photons arriving at later times. Nevertheless it may happen that some activity (emission) is induced which goes on for a longer period. To describe it quantitatively consider again for $\alpha = \alpha_{n-1} \ldots \alpha_1 \in F^+_d$ the projection
$P_\alpha$ which is the orthogonal projection
onto
\[
\epsilon_{\alpha_1} \otimes \ldots \otimes \epsilon_{\alpha_{n-1}}
\otimes (\Omega_n^\CP)^\perp \otimes \Omega_{[n+1,\infty)},
\]
already introduced in the proof of Theorem \ref{thm:comm}. In our physical interpretation it corresponds to the following event: We measure
data $\alpha_1, \ldots, \alpha_{n-1}$ at times $1,\ldots,n-1$ in the field, finally there is a last detection of photons corresponding to $(\Omega_n^\CP)^\perp$ at time $n$, nothing happens after time $n$.
This experimental record is obtained by measuring (at times indexed by the positive integers) an observable
$Y \in \BP$ with eigenvectors $\epsilon_1,\ldots,\epsilon_d$. Such lists of data have been used for indirect measurements of an atom, for quantum filtering and for updating protocols such as quantum trajectories, see for example \cite{BHJ,KM06} for such work with discrete time models. In our case the formula
\[
P_\alpha \, U(n) \eta \,=\, \Theta_U^{(\alpha)} \eta
\]
obtained in Theorem \ref{thm:comm} shows, according to the usual probabilistic interpretation of quantum mechanics, that
\[
\pi_\alpha := \|\, \Theta_U^{(\alpha)} \eta \,\|^2
\]
is the probability for the event described by $P_\alpha$ if we start in the state $\eta$ at time $0$. Actually the transfer function also keeps track of the complex amplitudes and contains additional coherent information.
This means that we can think of the transfer function $\Theta_U$ as a convenient way to assemble such data into a single mathematical object.

\section{Observability and scattering} \label{section:observ}

The identification of an outgoing Cuntz scattering system in Theorem \ref{thm:cuntz} allows to introduce many familiar concepts from linear systems theory. We single out a natural notion of observability and discuss its relations with ideas about a scattering theory of non-commutative Markov chains. 

\begin{definition} \label{def:observ}
The operator
\begin{eqnarray*}
W_0 := \Gamma_W |_{\CH^o}: \;\;\CH^o & \rightarrow & \ell^2(F^+_d, \CY) \\
         \xi & \mapsto &     \big( C A_\alpha \xi \big)_{\alpha \in F^+_d}
\end{eqnarray*}
is called the observability operator. The $F^+_d$-linear system is called (uniformly) observable if there exist $m,M >0$ such that
for all $\xi \in \CH^o$
\[
m\, \|\xi\|^2 \;\le\; \sum_{\alpha\in F^+_d} \| C A_\alpha \xi \|^2
= \|\,W_0 \xi\,\|^2
\le M \, \|\xi\|^2 .
\]
\end{definition}

Similar operators are also called Poisson kernels and interpreted accordingly by Popescu in \cite{Po99}. Note that for $dim\, \CH < \infty$
observability means that the observability operator $W_0$ is injective. This has a direct interpretation in our model which motivates the terminology quite well in this setting: Observability means that in an experiment with no $\CU$-inputs we can, by determining all $\CY$-outputs at all times, reconstruct the original state $\xi$ of the atom. We took the idea of uniform observability from \cite{FFGK97} where it is argued that this is a mathematically convenient generalization of observability for
$dim\, \CH = \infty$. We will see in the next theorem that this is also the case in our model. Note that 
we can always choose $M=1$ (by Theorem \ref{thm:comm}), so the crucial point for uniform observability is the existence of the uniform lower bound $m > 0$.

It is sometimes useful to extend $W_0$ to 
\[
\widehat{W}_0: \CH \rightarrow \Cset \oplus \ell^2(F^+_d, \CY) 
\]
by setting $\widehat{W}_0\, \OH := 1 \in \Cset$. If $W_0$ is uniformly observable then the defining inequalities extend to $\widehat{W}_0$ on $\CH$. 

\begin{theorem} \label{thm:scat}
The following assertions are equivalent:
\begin{itemize}
\item[(a)]
The system is (uniformly) observable.
\item[(b)]
The observability operator $W_0$ is isometric.
\item[(c)]
The transition operator $Z: \BH \rightarrow \BH$ is ergodic,\\
i.e., its fixed point space is $\Cset\,\eins$.
\item[(d)]
$W: (\CH \otimes \CK_\infty)^o \rightarrow (\CP_\infty)^o$
is unitary.
\end{itemize}
If these assertions are valid then we further have
\begin{itemize}
\item[(e)]
The transfer function $\Theta_U$ is inner, i.e., the multi-analytic operator $M_{\Theta_U}: \ell^2(F^+_d, \CU) \rightarrow \ell^2(F^+_d, \CY)$ is isometric. 
\end{itemize}
If $dim\,\CH < \infty$ and $dim\, \CP \ge 2$ then we also have the converse direction $(e) \Rightarrow (a,b,c,d)$.
\end{theorem}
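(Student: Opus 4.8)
**

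The plan is to prove the chain of equivalences by first establishing a direct link between the observability operator $W_0 = \Gamma_W|_{\CH^o}$ and the transition operator $Z$, and then exploiting the unitarity structure already built in Theorem~\ref{thm:comm}. My strategy is to close the cycle $(a) \Leftrightarrow (b)$, $(b) \Leftrightarrow (c)$, $(b) \Leftrightarrow (d)$, deduce $(e)$ under these conditions, and finally treat the converse $(e) \Rightarrow (a,b,c,d)$ under the finite-dimensionality hypothesis.

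First I would compute $\|W_0 \xi\|^2 = \sum_{\alpha \in F^+_d} \|C A_\alpha \xi\|^2$ and relate this sum to iterates of $Z$. The key observation is that $C^* C = P_\CY U|_{\CH}$ composed with its adjoint should reconstruct, for a state $\xi \in \CH$, the difference between $\|\xi\|^2$ and $\langle \xi, Z^n(\eins)\xi\rangle$-type quantities; more precisely, since $Z = \sum_j A_j^* \cdot A_j$ is unital and $C$ captures the ``loss'' into the output $\CY$ at each step, a telescoping identity $\sum_{|\alpha| \le N} \|C A_\alpha \xi\|^2 = \|\xi\|^2 - \langle \xi, Z^{N+1}(\eins)\xi\rangle$ (or a variant tracking the $\OH$-component) should emerge from the unitarity of the colligation $\CC_U^o$ established in Theorem~\ref{thm:M}. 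This immediately gives $(a)\Leftrightarrow(b)$, since $M=1$ always holds and isometry is exactly the borderline case $m=1$; and it links the lower bound to the decay of $Z^n$. For $(b)\Leftrightarrow(c)$, I would argue that $W_0$ is isometric precisely when $Z^n(\eins) \to$ the projection onto $\Cset\OH$ in the appropriate sense, which is the ergodicity of $Z$ (fixed-point space $\Cset\eins$); here the common eigenvector property of $\OH$ from Remark~\ref{rem:eigen} is essential.

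For $(b)\Leftrightarrow(d)$ I would use the commuting diagram of Theorem~\ref{thm:comm}: since $\Gamma_\CK, \Gamma_\CP$ are unitary and $\Gamma_W = \Gamma_\CP W \Gamma_\CK^{-1}$, the coisometry $W$ is unitary if and only if $\Gamma_W$ is, i.e.\ if and only if $\Gamma_W$ is isometric on all of $\CH^o \oplus \ell^2(F^+_d,\CU)$. Because $\Gamma_W|_{\ell^2(F^+_d,\CU)} = M_{\Theta_U}$ is already understood, the obstruction to isometry is concentrated on the $\CH^o$-summand, which is exactly $W_0$; a careful bookkeeping of the orthogonality between $W_0\CH^o$ and $M_{\Theta_U}\ell^2(F^+_d,\CU)$ inside $\ell^2(F^+_d,\CY)$ should show that $W$ is unitary iff $W_0$ is isometric. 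Then $(d)\Rightarrow(e)$ follows because if $W$ is unitary the full $\Gamma_W$ is unitary, forcing its restriction $M_{\Theta_U}$ to be isometric, i.e.\ $\Theta_U$ inner.

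The main obstacle, and the part requiring the finite-dimensional hypothesis, is the converse $(e)\Rightarrow(a,b,c,d)$. The difficulty is that in general $M_{\Theta_U}$ isometric does not force $W_0$ isometric: $\Gamma_W$ could be isometric on the $\ell^2(F^+_d,\CU)$ part while failing on $\CH^o$. The plan is to assume $\dim\CH < \infty$ and $\dim\CP \ge 2$ and argue by contradiction: if $Z$ is \emph{not} ergodic, there is a nonzero fixed point of $Z$ orthogonal to $\eins$, yielding (via finite dimensionality) a vector $\xi \in \CH^o$ with $W_0\xi$ of strictly smaller norm, so that $\CE_*$ (which has dimension $d-1 \ge 1$, as noted after Theorem~\ref{thm:cuntz}) gives genuine overlap forcing $M_{\Theta_U}$ to lose isometry. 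Concretely, I would show the range of $M_{\Theta_U}$ together with $W_0\CH^o$ must fill $\ell^2(F^+_d,\CY)$ when $W$ is unitary, and that a non-isometric $W_0$ obstructs the isometry of $M_{\Theta_U}$ once the state space is finite-dimensional so that spectral-radius arguments for $Z$ apply (a fixed point gives a subspace on which $Z^n(\eins)$ does not decay). The finite-dimensionality is what upgrades ``the lower bound $m$ may degenerate'' into the sharp dichotomy, and pinning down this implication cleanly is where the real work lies.
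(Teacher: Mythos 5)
Your overall architecture (a cycle through $(a),(b),(c),(d)$, then $(d)\Rightarrow(e)$, then a finite-dimensional converse) matches the paper, but several of your key steps have genuine gaps. First, the identity $\sum_{|\alpha|\le N}\|CA_\alpha\xi\|^2=\|\xi\|^2-\langle\xi,Z^{N+1}(\eins)\xi\rangle$ cannot be right as written: $Z$ is unital, so $Z^{N+1}(\eins)=\eins$ and the right-hand side vanishes. The correct bookkeeping uses the compressed tuple $A^o_j$ on $\CH^o$, or equivalently the increasing sequence $Z^n(p)$ with $p$ the projection onto $\Cset\OH$, which converges to a fixed point $x$ of $Z$; the paper shows $\|\widehat{W}_0\xi\|^2=\langle\xi,x\xi\rangle$ and reads off $(b)\Leftrightarrow(c)$ from the fact that ergodicity means $x=\eins$. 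Second, $(a)\Rightarrow(b)$ is not ``immediate'': condition $(a)$ only gives a lower bound $m>0$, and you must still show this forces isometry. The paper does not do this directly; it proves $(a)\Rightarrow(d)$ by an approximation argument (approximate $\eta$ by $\eta'\in\CH\otimes\CK_{[1,n]}$, decompose $U(n)\eta'$ over an orthogonal family in $\CP_{[1,n]}$, use the uniform bound to get $\|\widehat{W}\eta'\|^2\ge m\|\eta'\|^2$ and hence injectivity of the coisometry $\widehat{W}$, which makes it unitary), and then $(d)\Rightarrow(b)$ is trivial. Third, your route $(b)\Rightarrow(d)$ rests on the assertion that ``the obstruction to isometry of $\Gamma_W$ is concentrated on the $\CH^o$-summand''; that is precisely what needs proof, since isometry of $\Gamma_W$ on $\ell^2(F^+_d,\CU)$ is condition $(e)$, which you only derive later from $(d)$ --- as stated the argument is circular.

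The converse $(e)\Rightarrow(b)$ is also not pinned down. The paper's argument is concrete: set $\CH_{scat}:=\CH\cap\widehat{W}^*\CP_\infty$, observe $U(\CH_{scat}\otimes\OK)\subset\CH_{scat}\otimes\CP$, show that $(e)$ forces $U(\CH\otimes(\OK)^\perp)\subset\CH_{scat}\otimes\CP$, deduce $U^*\big((\CH\ominus\CH_{scat})\otimes\CP\big)\subset(\CH\ominus\CH_{scat})\otimes\OK$, and then compare dimensions: injectivity of $U^*$ gives $\dim(\CH\ominus\CH_{scat})\cdot\dim\CP\le\dim(\CH\ominus\CH_{scat})$, which with $\dim\CP\ge2$ and $\dim\CH<\infty$ forces $\CH_{scat}=\CH$. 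Your sketch (``spectral-radius arguments'', ``genuine overlap forcing $M_{\Theta_U}$ to lose isometry'') does not contain this dimension-counting step, and your claim that the range of $M_{\Theta_U}$ together with $W_0\CH^o$ fills $\ell^2(F^+_d,\CY)$ presupposes $(d)$, which is not available in this direction. These are the places where the real work of the theorem lies, and they are missing from the proposal.
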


\begin{proof}
$(d) \Rightarrow (b) \Rightarrow (a)$ is obvious. We now prove 
$(a) \Rightarrow (d)$, hence establishing the equivalence of $(a),(b)$
and $(d)$.

Given $0 \not= \eta \in \CH \otimes \CK_\infty$, approximate it by
$\eta^\prime \in \CH \otimes \CK_{[1,n]}$ (for some $n \in \Nset$)
such that 
\[
\| \,\eta - \eta^\prime \,\| \;<\; \frac{\sqrt{m}}{\sqrt{m}+1}\,\|\eta\|,
\]
here $m$ is the constant appearing in Definition \ref{def:observ} of (uniform) observability.

Suppose for a moment that $U_n \ldots U_1 \eta^\prime = \xi^\prime \otimes \zeta^\prime
\otimes \Omega_{[n+1,\infty)}^\CK$ with $\xi^\prime \in \CH$ and
$\zeta^\prime \in \CP_{[1,n]}$. Using Proposition \ref{prop:W} we find that $\|\,Q_N\, U_N \ldots U_n \ldots U_1 \eta^\prime \,\|$ tends for
$N\to\infty$ to $\|\widehat{W}\,\eta^\prime\,\|$.
But it also tends to
$\|\,\widehat{W}_0 \xi^\prime\,\|\,\|\,\zeta^\prime\,\|$ which by (uniform) observability is greater or equal than $\sqrt{m}\,\|\,\xi^\prime\,\|\,\|\,\zeta^\prime\,\|$. We conclude that
in this case we have $\|\,\widehat{W}\,\eta^\prime\,\|^2 \ge m\,\|\,\eta^\prime\,\|^2$. In the general case we can always write
$U_n \ldots U_1 \eta^\prime = \sum_j \xi^\prime_j \otimes \zeta^\prime_j
\otimes \Omega_{[n+1,\infty)}^\CK$ with $\xi_j^\prime \in \CH$ and
orthogonal vectors $\zeta^\prime_j \in \CP_{[1,n]}$. By handling the summands as above and summing up, we can verify the inequality above also in the general case, i.e., for all $\eta^\prime \in \CH \otimes \CK_{[1,n]}$
\[
\|\,\widehat{W}\eta^\prime\,\|^2 \;\ge\; m\,\|\eta^\prime\|^2 .
\]
Now we conclude that
\begin{eqnarray*}
\|\,\widehat{W}\eta\,\| 
&\ge& \|\,\widehat{W}\,\eta^\prime\,\| - \|\widehat{W}\,(\eta-\eta^\prime)\|\\
&\ge& \sqrt{m}\, \|\eta^\prime\| - \|\eta-\eta^\prime\| \\
&\ge& \sqrt{m}\, \|\eta\| - \big( \sqrt{m}+1 \big)\, \|\eta-\eta^\prime\| 
> 0 .
\end{eqnarray*}
Hence $\widehat{W}\eta \not= 0$ for all $0 \not= \eta \in \CH \otimes \CK_\infty$, i.e., $\widehat{W}$ is injective. But by Proposition \ref{prop:W} we also know that $\widehat{W}$ is a coisometry and an injective coisometry is unitary. Because $\widehat{W} \big(\OH \otimes \Omega_\infty^\CK\big) = \Omega_\infty^\CP$ it is clear that $W$ is unitary
if and only if $\widehat{W}$ is unitary.
This proves $(d)$.

To include condition $(c)$ we make use of the following well known facts about positive maps, see for example A.5.2 in \cite{Go04} for proofs. 
Because $Z: \BH \rightarrow \BH$ is positive with invariant vector state given by $\OH \in \CH$ it follows that $Z^n(p)$, with $p$ being the one-dimensional projection onto $\Cset \OH$, forms an increasing sequence which sot-converges to a fixed point $0 < x \le \eins$ of $Z$. The map $Z$ is ergodic if and only if $x=\eins$.

With the operators $Q_n$ introduced for Proposition \ref{prop:W}
we find for $\xi \in \CH$
\[
\|\, Q_n\,U(n) \xi\,\|^2 = \langle\,p \otimes \eins\;U(n) \xi,  \,p \otimes \eins\;U(n) \xi\,\rangle
= \langle\, \xi, U(n)^* p \otimes \eins\,U(n) \xi \,\rangle 
= \langle\, \xi, Z^n(p) \xi \,\rangle
\]
and hence
\[
\lim_{n\to\infty} \|\, Q_n\,U(n) \xi\,\|^2 = \langle\, \xi, x \xi \,\rangle,
\]
where $x$ is the fixed point of $Z$ mentioned above. But from Proposition \ref{prop:W} we also know that
$
\lim_{n\to\infty} \|\, Q_n\,U(n) \xi\,\| = \|\,\widehat{W} \xi\,\| = \|\,\widehat{W}_0 \xi\,\| .
$
and we obtain
\[
\|\,\widehat{W}_0\, \xi\,\|^2 = \langle\, \xi, x \xi \,\rangle .
\]
Using this it is easy to see that $(c)$ is equivalent to $(b)$. 
In fact, if we assume $(c)$, i.e. $Z$ is ergodic, then $x=\eins$ and hence we have $\|\,\widehat{W}_0\, \xi\,\| = \|\xi\|$ for all $\xi \in \CH$, which implies
$(b)$. If $Z$ is not ergodic then $x \not= \eins$ and hence there exists
a vector $\xi \in \CH$ so that
\[
\|\,\widehat{W}_0\, \xi\,\|^2 = \langle\, \xi, x \xi \,\rangle < \|\xi\|^2 ,
\]
which contradicts $(b)$. 

$(d) \Rightarrow (e)$ is clear because by Theorem \ref{thm:comm} the operator $M_{\Theta_U}$ is a restriction of $\Gamma_W$ which is unitarily equivalent to $W$. To consider the converse direction we define
\[
\CH_{scat} := \CH \cap \widehat{W}^* \CP_\infty
= \Cset \OH \oplus \{ \xi \in \CH^o:\; \|W_0\,\xi\| = \|\xi\|\; \}
\] 
From $\| \,\widehat{W}_0 \xi\,\| = \|\,\lim_{n\to\infty} Q\,U(n) \xi\,\|$ (Proposition \ref{prop:W})
we infer that
\[
U (\CH_{scat} \otimes \OK) \subset \CH_{scat} \otimes \CP .
\]
If we have $(e)$, i.e. $M_{\Theta_U} = \Gamma_W |_{\ell^2(F^+_d, \CU)}$ is isometric, then we must also have
\[
U (\CH \otimes (\OK)^\perp) \subset \CH_{scat} \otimes \CP .
\]
Together this implies that
\[
U^* \big( (\CH \ominus \CH_{scat}) \otimes \CP \big) 
\subset (\CH \ominus \CH_{scat}) \otimes \OK .
\]
If $dim\,\CH < \infty$ and $dim \CP \ge 2$ then a comparison of the dimensions of the left and right hand side of this inclusion forces
$\CH \ominus \CH_{scat} = \{0\}$,
i.e. $\CH_{scat} = \CH$, which is clearly equivalent to $(b)$. Hence under the stated assumptions about dimensions we also have the implication $(e) \Rightarrow (b)$.
\end{proof}  

Some of the techniques used in the proof of Theorem \ref{thm:scat} resemble those used in the scattering theory of non-commutative Markov chains, see \cite{KM00,Go04,GKL06}. This is not an accident. In fact, a discrete time version of this theory can be based on the following data: a unital $C^*$-algebra $\CA$ with a state $\phi$, another unital $C^*$-algebra $\CC$ with a state $\psi$ and an automorphism $\alpha$ of $\CA \otimes \CC$ which has $\phi \otimes \psi$ as an invariant state. This automorphism is interpreted as an interaction and gives rise to a one-step dynamics of a non-commutative Markov chain. We may consider this setting as a generalization of our model where we only considered the algebras of all operators on a Hilbert space together with pure states. But in a way we can also reduce the general setting to our model by using the GNS-construction: Let $\CH$ be the GNS-space of $\CA$
with cyclic vector $\OH$ and $\CK$ the GNS-space of $\CC$ with cyclic vector $\OK$. Define a unitary operator $U: \CH \otimes \OK \rightarrow \CH \otimes \OK$ by
\[
U^*: y\;\OH \otimes \OK \mapsto \alpha(y)\;\OH \otimes \OK
\]
(We take this to be $U^*$ instead of $U$ because the automorphism $\alpha$ acting on algebras of observables represents the Heisenberg picture while $U$ in this paper is designed to represent the Schr\"{o}dinger picture.) Then with $\CK=\CP$ and $\OK=\OP$ we have constructed the setting of our Definition \ref{def:interaction} and the results of this paper apply. In particular we may call the corresponding
transfer function $\Theta_U$ {\it the transfer function of the stationary state non-commutative Markov chain}. It is shown in \cite{Go04} that a substantial part of the scattering theory of these Markov chains can be studied and simplified on the level of the GNS-construction. In particular, as can be seen from comparing condition $(c)$ of Theorem \ref{thm:scat} with
\cite{Go04}, 2.6, or \cite{GKL06}, Th.4.3, observability in the sense of Definition \ref{def:observ} is equivalent to asymptotic completeness in the scattering theory of the Markov chain. The notation $\CH_{scat}$ introduced in the proof of Theorem \ref{thm:scat} is motivated by the idea of scattering states: In our model this means that in the long run they asymptotically end up in what we called $\OH \otimes \CP_\infty$ which is completely disentangled from the atom described by $\CH$. 
From this point of view we can look at the approach in this paper as a way to generalize the scattering theory of non-commutative Markov chains to situations which are not asymptotically complete. The transfer function is then closely related to (and generalizes) the Moeller operator in the terminology of \cite{KM00,Go04,GKL06}. 

Finally we want to comment about the relationship between this paper and the operator theoretic investigations about characteristic functions in \cite{DG07,DG}. Recall that we can write 
$Z(\cdot) = \sum_j A^*_j \cdot A_j: \BH \rightarrow \BH$
for the transition operator and $(A^*_1,\ldots,A^*_d)$ is a row contraction with $\OH$ a common eigenvector for $A_1,\ldots,A_d$,
say $A_j\, \OH = \omega_j\, \OH$ and $\omega_j \in \Cset,\; j=1,\ldots,d$
(compare Remark \ref{rem:eigen}).
With $\CH = \Cset\,\OH \oplus \CH^o$ and $A^o_j$ the compression of $A_j$ to $\CH^o$ we have $(A^o_j)^* = A^*_j |_{\CH^o}$ and we obtain block matrices of the form 
\[
A^*_j \;=\;
\left( \begin{array}{cc}
        \overline{\omega}_j & 0 \\ 
        * &  (A^o_j)^* \\ 
        \end{array} 
\right),\quad\quad j=1,\ldots,d
\]
i.e., $(A^*_1,\ldots,A^*_d)$ is a lifting in the terminology of \cite{DG}.
The characteristic function introduced in \cite{DG} is a multi-analytic operator associated to a lifting and the ergodic case is studied in detail in \cite{DG07}. As shown in \cite{DG07},
Prop. 2.1, another condition equivalent to $(a),\,(b),\,(c),\,(d)$
of our Theorem \ref{thm:scat} is
\begin{itemize}
\item[($c^\prime$)]
$(A^o_1,\ldots,A^o_d)$ is stable, i.e., for all $\xi \in \CH^o$
\[
\lim_{n\to\infty} \sum_{|\alpha|=n} \| A^o_\alpha\, \xi\|^2 = 0\;,
\]
\end{itemize}
which is a variant of the close connection between observability and stability well known in linear systems theory, see for example the discussion in Chapter III of \cite{FFGK97}. 

The main difference between \cite{DG07,DG} and the investigations presented here is as follows. In \cite{DG07,DG} we start from the tuple $\underline{A}$ or the map $Z$ and use the theory of minimal isometric dilations. The multi-analytic operator obtained is thus a characteristic function associated to $\underline{A}$ or $Z$. In this paper we do not consider minimality but we start with the interaction $U$ and obtain a multi-analytic operator which represents the transfer function of an input-output system associated to the interaction. This additional flexibility is very useful in the physical modeling because, unlike for classical Markov chains, there are quite different environments and interactions which give rise to the same transition operator $Z$, compare \cite{Ku03}. Hence we expect that the scheme developed here is more directly applicable to physical models. Of course the study of non-minimal dilations is also of interest for operator theory. On the other hand we note that in the setting of \cite{DG} the assumption of a one-dimensional eigenspace is dropped and the theory is much more general in another direction. A further integration of these schemes in the future may help to remove unnecessarily restrictive assumptions of the toy model considered in this paper and lead to the study of other and of more realistic models.

\end{document}